\newcommand{\be}{\begin{equation}}
\newcommand{\ee}{\end{equation}}
\newcommand{\ba}{\begin{aligned}}
\newcommand{\ea}{\end{aligned}}
\newcommand{\R}{\mathbb{R}}
\newcommand{\N}{\mathbb{N}}
\def\spazio{\kern-.09em }
\newcommand{\ind}{\mathbf{1}}
\newcommand{\lsi}{\left[\negthinspace\left[}
\newcommand{\rsi}{\right]\negthinspace\right]}
\newcommand{\lsir}{\left(\spazio\negthinspace\left(}
\newcommand{\rsir}{\right)\spazio\negthinspace\right)}
\newcommand{\M}{\mathcal{M}}
\newcommand{\Mloc}{\mathcal{M}_{\text{loc}}}
\newcommand{\mMloc}{\mathcal{M}_{\emph{loc}}}
\newcommand{\D}{\mathcal{D}}
\newcommand{\Dweak}{\mathcal{D}^{\text{weak}}}
\newcommand{\mDweak}{\mathcal{D}^{\emph{weak}}}
\newcommand{\A}{\mathcal{A}}
\newcommand{\F}{\mathcal{F}}
\newcommand{\FF}{\mathbb{F}}
\newcommand{\hatK}{\widehat{K}}
\newcommand{\hatZ}{\widehat{Z}}
\newcommand{\hatQ}{\widehat{Q}}
\newcommand{\hatV}{\widehat{V}}
\newcommand{\tildeS}{\widetilde{S}}
\newcommand{\tildeH}{\widetilde{H}}
\newcommand{\tildeK}{\widetilde{K}}
\newcommand{\tildeC}{\widetilde{C}}
\newcommand{\bit}{\bibitem}
\DeclareMathOperator{\var}{Var}
\DeclareMathOperator{\sign}{sign}
\DeclareMathOperator{\conv}{conv}
\newtheorem{Thm}{\bf Theorem}[section]
\newtheorem{Def}[Thm]{\bf Definition}
\newtheorem{Prop}[Thm]{\bf Proposition}
\newtheorem{Lem}[Thm]{\bf Lemma}
\newtheorem{Cor}[Thm]{\bf Corollary}
\theoremstyle{remark}
\newtheorem{Rem}[Thm]{\bf Remark}
\newtheorem{Rems}[Thm]{\bf Remarks}
\newtheorem{Ex}[Thm]{\bf Example}
\numberwithin{equation}{section}
\renewcommand*\@fnsymbol[1]{\the#1}
\title{Weak and strong no-arbitrage conditions \\ for continuous financial markets}
\author{\vspace{0.2cm} Claudio Fontana}
\affil{\normalsize{
Laboratoire de Math\'ematiques et Mod\'elisation, Universit\'e d'\'Evry Val d'Essonne \\\vspace{-0.1cm} 
23 Boulevard de France, \'Evry Cedex, 91037, France.} \\\vspace{0.2cm}
E-mail: \texttt{claudio.fontana@univ-evry.fr}}
\date{This version: May 13, 2014}
\begin{document}

\maketitle

\abstract{\begin{spacing}{1.1}\noindent We propose a unified analysis of a whole spectrum of no-arbitrage conditions for financial market models based on continuous semimartingales. In particular, we focus on no-arbitrage conditions weaker than the classical notions of \emph{No Arbitrage} and \emph{No Free Lunch with Vanishing Risk}. We provide a complete characterisation of the considered no-arbitrage conditions, linking their validity to the characteristics of the discounted asset price process and to the existence and the properties of \emph{(weak) martingale deflators}, and review classical as well as recent results.\end{spacing}}
\vspace{0.5cm}

\begin{spacing}{1}\noindent \textbf{Keywords:} arbitrage, benchmark approach, continuous semimartingale, martingale deflator, market price of risk, arbitrage of the first kind, free lunch with vanishing risk.\\[-0.2cm]

\noindent \textbf{MSC (2010):} 60G44, 60H05, 91B70, 91G10.
\end{spacing}

\section{Introduction}	\label{S1}

Modern mathematical finance is strongly rooted on the \emph{no-arbitrage} paradigm. In a nutshell, this amounts to excluding the possibility of ``making money out of nothing'' by cleverly trading in the financial market. Since the existence of such a possibility is both unrealistic and, loosely speaking, conflicts with the existence of an economic equilibrium, any mathematical model for a realistic financial market is required to satisfy a suitable no-arbitrage condition, in the absence of which one cannot draw meaningful conclusions on asset prices and investors' behavior.

The search for a satisfactory no-arbitrage condition has a rather long history, grown at the border between financial economics and mathematics. We do not attempt here a detailed overview of the historical developments of modern no-arbitrage theory, but only mention the seminal papers \cite{HK}-\cite{HP} and refer the reader to \cite{DS} and \cite{Sch} for more information. A decisive step in this history was marked by the paper \cite{DS94}, where, in the case of locally bounded processes, it was proved the equivalence between the \emph{No Free Lunch with Vanishing Risk (NFLVR)} condition (a condition slightly stronger than the classical \emph{No Arbitrage (NA)} condition) and the existence of an \emph{Equivalent Local Martingale Measure (ELMM)}, i.e., a new probability measure equivalent to the original one such that the discounted asset price process is a local martingale under the new measure. The local boundedness assumption was then removed in the subsequent papers \cite{DS98b} and \cite{Kab}.

The NFLVR condition has established itself as a golden standard and the vast majority of models proposed in quantitative finance satisfies it. However, financial market models that fail to satisfy NFLVR have also appeared in recent years. In particular, in the context of the \emph{Benchmark Approach} (see \cite{Pl1}-\cite{PH}), a new asset pricing theory has been developed without relying on the existence of ELMMs. A similar perspective is also adopted in the \emph{Stochastic Portfolio Theory} (see \cite{Fern}-\cite{FK}), where the NFLVR condition is not imposed as a normative assumption and it is shown that arbitrage opportunities may naturally arise in the market.
Related works that explicitly consider situations where NFLVR may fail are \cite{Cass}, \cite{CT}, \cite{CL}, \cite{Font}, \cite{FJS}, \cite{HS}, \cite{KK}, \cite{Ka1}, \cite{MR}, \cite{RR} and also, in the more specific case of diffusion models, \cite{FR}, \cite{LW}, \cite{Lo} and \cite{Ruf} (see later in the text for more information). Somewhat surprisingly, these works have shown that the full strength of NFLVR is not necessarily needed in order to solve the fundamental problems of valuation, hedging and portfolio optimisation. However, the situation is made complicated by the fact that many different and alternative no-arbitrage conditions have been proposed in the literature during the last two decades.

Motivated by the preceding discussion, the present paper aims at presenting a unified and clear perspective on the most significant no-arbitrage conditions in the context of general financial market models based on continuous semimartingales\footnote{The continuous semimartingale setting allows for a rather transparent analysis and covers many models widely used in quantitative finance (in particular, almost all models developed in the context of Stochastic Portfolio Theory).}. In particular, we study several no-arbitrage conditions which are \emph{weaker} than the classical and \emph{strong} NFLVR condition, namely the \emph{No Increasing Profit (NIP)}, \emph{No Strong Arbitrage (NSA)} and \emph{No Arbitrage of the First Kind (NA1)} conditions. We prove the following implications:
\be	\label{impl}
\text{ NFLVR $\Longrightarrow$ NA1 $\Longrightarrow$ NSA $\Longrightarrow$ NIP. }
\ee
By means of explicit examples and counterexamples, we illustrate these implications and discuss their economic meaning, their relations with the Benchmark Approach as well as the connections to several other conditions which have appeared in the literature, thus providing a complete picture of a whole spectrum of no-arbitrage conditions. Moreover, we prove that none of the converse implications in \eqref{impl} holds in general.

We show that \emph{weak} no-arbitrage conditions (i.e., NIP, NSA and NA1) can be completely characterised in terms of the semimartingale characteristics of the discounted price process, while this is in general impossible for \emph{strong} no-arbitrage conditions (NA and NFLVR), since the latter also depend on the structure of the underlying filtration. Moreover, we link the validity of different no-arbitrage conditions to the existence and the properties of \emph{(weak) martingale deflators}, which can be thought of as weaker counterparts of density processes of ELMMs. In particular, we show that the weak NSA and NA1 conditions (as well as their equivalent formulations) can be directly checked by looking at the \emph{minimal} weak martingale deflator, the properties of which are easily determined by the mean-variance trade-off process of the discounted price process. Furthermore, we prove that NA1 (as well as its equivalent formulations) is stable with respect to changes of numéraire (see Corollary \ref{stable}), unlike the classical NFLVR condition, and allows to recover NFLVR by means of a suitable change of numéraire (see Corollary \ref{NA1-NFLVR}).

Altogether, referring to Section \ref{S7} for a more detailed discussion on the economic aspects, the present study shows that the NIP and NSA conditions can be regarded as indispensable no-arbitrage requirements for any realistic financial market, but are not enough for the purposes of financial modeling. On the contrary, the NA1 condition, even though strictly weaker than the classical NFLVR condition, is equivalent to a meaningful notion of \emph{market viability} and allows for a satisfactory solution of all typical problems of mathematical finance.

To the best of our knowledge, there does not exist in the literature a similar unifying analysis of the weak no-arbitrage conditions going beyond the classical notions of NA and NFLVR. The only exception is contained in Chapter 1 of \cite{Hulley}. In comparison with the latter work, our approach puts more emphasis on the role of (weak) martingale deflators and also carefully takes into account minimal no-arbitrage conditions that are weaker than the NUPBR condition, on which the presentation in \cite{Hulley} is focused. Moreover, besides providing different and original proofs, we also study the NIP, NA1, NCT and NAA no-arbitrage conditions (see e.g. the table in Section \ref{S7}), which are not explicitly considered in \cite{Hulley}, and drop the non-negativity assumption on the discounted asset prices.

The paper is structured as follows. Section \ref{S2} presents the general setting and introduces the main no-arbitrage conditions which shall be studied in the following. Section \ref{S3}, \ref{S4} and \ref{S5} discuss the NIP, NSA and NA1 conditions, respectively. Section \ref{S6} deals with the classical NFLVR condition and discusses its relations with the previous no-arbitrage conditions. Section \ref{examples} illustrates the implications \eqref{impl} by means of several examples. Finally, Section \ref{S7} concludes by summarising the different no-arbitrage conditions studied in the present paper and commenting on their economic implications.

\section{General setting and preliminaries}	\label{S2}

Let $(\Omega,\F,\FF,P)$ be a given filtered probability space, where the filtration $\FF=(\F_t)_{0\leq t \leq T}$ is assumed to satisfy the usual conditions of right-continuity and $P$-completeness and, for the sake of simplicity, $T\in\left(0,\infty\right)$ represents a finite time horizon (all the results we are going to present can be rather easily adapted to the infinite horizon case). Note that the initial $\sigma$-field $\F_0$ is not assumed to be trivial. Let $\M$ denote the family of all uniformly integrable $P$-martingales and $\Mloc$ the family of all local $P$-martingales. Without loss of generality, we assume that all processes in $\Mloc$ have càdlàg paths and we denote by $\M^c$ and $\Mloc^c$ the families of all processes in $\M$ and $\Mloc$, respectively, with continuous paths.

We consider a financial market comprising $d+1$ assets, whose prices are represented by the $\R^{d+1}$-valued process $\tildeS=(\tildeS_t)_{0\leq t \leq T}$, with $\tildeS_t=(\tildeS^0_t,\tildeS^1_t,\ldots,\tildeS^d_t)^{\top}$, with $^{\top}$ denoting transposition. We assume that $\tildeS^0_t$ is $P$-a.s. strictly positive for all $t\in[0,T]$ and, as usual in the literature, we then take asset $0$ as numéraire and express all quantities in terms of $\tildeS^0$. This means that the ($\tildeS^0$-discounted) price of asset $0$ is constant and equal to $1$ and the remaining $d$ risky assets have ($\tildeS^0$-discounted) prices described by the $\R^d$-valued process $S=(S_t)_{0\leq t \leq T}$, with $S^i_t:=\tildeS^i_t/\tildeS^0_t$ for all $t\in[0,T]$ and $i=1,\ldots,d$. The process $S$ is assumed to be a \emph{continuous} $\R^d$-valued semimartingale on $(\Omega,\F,\FF,P)$. In particular, $S$ is a special semimartingale, admitting the unique canonical decomposition $S=S_0+A+M$, where $A$ is a continuous $\R^d$-valued predictable process of finite variation and $M$ is an $\R^d$-valued process in $\Mloc^c$ with $M_0=A_0=0$. Due to Proposition II.2.9 of \cite{JS}, it holds that, for all $i,j=1,\ldots,d$,
\be	\label{char}
A^i = \int\!a^idB
\qquad\text{and}\qquad
\langle S^i,S^j\rangle = \langle M^i,M^j\rangle = \int\!c^{ij} dB,
\ee
where $B$ is a continuous real-valued predictable increasing process, $a=(a^1,\ldots,a^d)^{\top}$ is an $\R^d$-valued predictable process and $c=\bigl((c^{i1})_{1\leq i \leq d},\ldots,(c^{id})_{1\leq i \leq d}\bigr)$ is a predictable process taking values in the cone of symmetric nonnegative $d\times d$ matrices. The processes $a$, $c$ and $B$ in \eqref{char} are not unique in general, but our results do not depend on the specific choice we make (for instance, $B$ can be taken as $B=\sum_{i=1}^d(\var(A^i)+\langle M^i\rangle)$, with $\var(\cdot)$ denoting the total variation). Note that we do not necessarily assume that $S$ takes values in the positive orthant of $\R^d$. For every $t\in[0,T]$, let us denote by $c^+_t$ the Moore-Penrose pseudoinverse of the matrix $c_t$. The proof of Proposition 2.1 of \cite{DSpr} shows that the process $c^+=(c^+_t)_{0\leq t \leq T}$ is predictable and, hence, the process $a$ can be represented as
\be	\label{char-2}
a = c\,\lambda+\nu,
\ee
where $\lambda=(\lambda_t)_{0\leq t \leq T}$ is the $\R^d$-valued predictable process defined by $\lambda_t:=c^+_ta_t$, for all $t\in[0,T]$, and $\nu=(\nu_t)_{0\leq t \leq T}$ is an $\R^d$-valued predictable process with $\nu_t\in\text{Ker}(c_t):=\{x\in\R^d:c_tx=0\}$, for all $t\in[0,T]$.

We suppose that the financial market is frictionless, meaning that there are no trading restrictions, transaction costs, liquidity effects or other market imperfections. In order to mathematically describe the activity of trading, we need to introduce the notion of \emph{admissible strategy}. To this effect, let $L(S)$ be the set of all $\R^d$-valued $S$-integrable predictable processes, in the sense of \cite{JS}, and, for $H\in L(S)$, denote by $H\cdot S$ the stochastic integral process $\bigl(\int_0^t\!H_udS_u\bigr)_{0\leq t \leq T}$. Since $S$ is a continuous semimartingale, Proposition III.6.22 of \cite{JS} implies that $L(S)=L^2_{\text{loc}}(M)\cap L^0(A)$, where $L^2_{\text{loc}}(M)$ and $L^0(A)$ are the sets of all $\R^d$-valued predictable processes $H$ such that $\int_0^T\!H^{\top}_td\langle M,M\rangle_tH_t<\infty$ $P$-a.s. and $\int_0^T\!|H_t^{\top}dA_t|<\infty$ $P$-a.s., respectively. Hence, due to \eqref{char}, an $\R^d$-valued predictable process $H$ belongs to $L(S)$ if and only if
$$
\int_0^T\!v(H)_t\,dB_t<\infty \text{ $P$-a.s. }
\qquad\text{where}\quad
v(H)_t := \sum_{i,j=1}^d\!H^i_tc^{ij}_tH^j_t+\Bigl|\sum_{i=1}^dH^i_ta^i_t\Bigr|.
$$

\begin{Rem}	\label{integrands}
The set $L(S)$ represents the most general class of predictable integrands with respect to $S$. In particular, it contains non-locally bounded integrands, as in \cite{CMS}. Note that, for $H\in L\left(S\right)$, we have $H\cdot M\in\Mloc^c$ and the continuous semimartingale $H\cdot S$ admits the unique canonical decomposition $H\cdot S=H_0S_0+H\cdot A+H\cdot M$. We also want to emphasize that $H\cdot S$ has to be understood as the vector stochastic integral of $H$ with respect to $S$ and is in general different from the sum of the componentwise stochastic integrals $\sum_{i=1}^d\int\!H^i dS^i$; see for instance \cite{Ja} and \cite{SC}.
\end{Rem}

We are now in a position to formulate the following classical definition.

\begin{Def}	\label{strategy}
Let $a\in\R_+$. An element $H\in L(S)$ is said to be an \emph{$a$-admissible strategy} if $H_0=0$ and 
$(H\cdot S)_t\geq -a$ $P$-a.s. for all $t\in[0,T]$. An element $H\in L(S)$ is said to be an  \emph{admissible strategy} if it is an $a$-admissible strategy for some $a\in\R_+$.
\end{Def}

For $a\in\R_+$, we denote by $\A_a$ the set of all $a$-admissible strategies and by $\A$ the set of all admissible strategies, i.e., $\A=\bigcup_{a\in\R_+}\!\A_a$. As usual, $H^i_t$ represents the number of units of asset $i$ held in the portfolio at time $t$. The condition $H_0=0$ amounts to requiring that the initial position in the risky assets is zero and, hence, the initial endowment is entirely expressed in terms of the numéraire asset. For $H\in\A$, we define the \emph{gains from trading} process $G(H)=\bigl(G_t(H)\bigr)_{0\leq t \leq T}$ by $G_t(H):=(H\cdot S)_t$, for all $t\in[0,T]$.
According to Definition \ref{strategy}, the process $G\left(H\right)$ associated to an admissible strategy $H\in\A$ is uniformly bounded from below by some constant. This restriction is needed since the set $L\left(S\right)$ is too large for the purpose of modeling reasonable trading strategies and may also contain doubling strategies. This possibility is automatically ruled out if we impose a limit to the line of credit which can be granted to every market participant.
For $(x,H)\in\R_+\!\times\A$, we define the \emph{portfolio value} process $V(x,H)=\bigl(V_t(x,H)\bigr)_{0\leq t \leq T}$ by $V(x,H):=x+G(H)$. This corresponds to considering portfolios generated by \emph{self-financing} admissible strategies.

We now introduce five main notions of arbitrage. 

\begin{Def}	\label{arb} 
\mbox{}
\begin{itemize}
\item[(i)]
An element $H\in\A_0$ generates an \emph{increasing profit} if the process $G(H)$ is predictable\footnote{The reason for requiring $G(H)$ to be predictable will become clear in Theorem \ref{NIP}, which is formulated with respect to possibly discontinuous locally square-integrable semimartingales, in the sense of Definition II.2.27 of \cite{JS}. Of course, as soon as $S$ is continuous, the predictability requirement becomes unnecessary.} and if $P\bigl(G_s(H)\leq G_t(H),\text{ for all }0\leq s\leq t\leq T\bigr)=1$ and $P\bigl(G_T(H)>0\bigr)>0$. If there exists no such $H\in\A_0$ we say that the \emph{No Increasing Profit (NIP)} condition holds;
\item[(ii)]
an element $H\in\A_0$ generates a \emph{strong arbitrage opportunity} if $P\bigl(G_T(H)>0\bigr)$ $>0$. If there exists no such $H\in\A_0$, i.e., if $\bigl\{G_T(H):H\in\A_0\bigr\}\cap L^0_+=\{0\}$, we say that the \emph{No Strong Arbitrage (NSA)} condition holds;
\item[(iii)]
a non-negative random variable $\xi$ generates an \emph{arbitrage of the first kind} if $P(\xi>0)>0$ and for every $v\in(0,\infty)$ there exists an element $H^v\in\A_v$ such that $V_T(v,H^v)\geq\xi$ $P$-a.s. If there exists no such random variable $\xi$ we say that the \emph{No Arbitrage of the First Kind (NA1)} condition holds;
\item[(iv)]
an element $H\in\A$ generates an \emph{arbitrage opportunity} if $G_T(H)\geq0$ $P$-a.s. and  $P\bigl(G_T(H)>0\bigr)>0$. If there exists no such $H\in\A$, i.e., if $\bigl\{G_T(H):H\in\A\bigr\}\cap L^0_+=\{0\}$, we say that the \emph{No Arbitrage (NA)} condition holds;
\item[(v)]
a sequence $\{H^n\}_{n\in\N}\subset\A$ generates a \emph{Free Lunch with Vanishing Risk} if there exist an $\varepsilon>0$ and an increasing sequence $\{\delta_n\}_{n\in\N}$ with $0\leq\delta_n\nearrow 1$ such that $P\bigl(G_T(H^n)>-1+\delta_n\bigr)=1$ and $P\bigl(G_T(H^n)>\varepsilon\bigr)\geq\varepsilon$, for all $n\in\N$. If there exists no such sequence we say that the \emph{No Free Lunch with Vanishing Risk (NFLVR)} condition holds.
\end{itemize}
\end{Def}

The NIP condition is similar to the No Unbounded Increasing Profit condition introduced in \cite{KK} and represents the strongest notion of arbitrage among those listed above. The ``unboundedness'' in the original definition of \cite{KK} can be explained as follows: if $H\in\A_0$ yields an increasing profit in the sense of Definition \ref{arb}-\emph{(i)}, we have $H^n:=nH\in\A_0$ and $G(H^n)\geq G(H)$, for every $n\in\N$. This means that the increasing profit generated by $H$ can be scaled to arbitrarily large levels of wealth. The NSA condition corresponds to the notion of absence of arbitrage opportunities adopted in Section 3 of \cite{LW} as well as to the NA$^+$ condition studied in \cite{Str}. The above formulation of the notion of arbitrage of the first kind has been introduced by \cite{Ka1}. The NA and NFLVR conditions are classical and, in particular, go back to the seminal papers \cite{HK}, \cite{HP} and \cite{DS94}. Note that the NA condition can be equivalently formulated as $\mathcal{C}\cap L^{\infty}_+=\{0\}$, where $\mathcal{C}:=\left(\left\{G_T(H):H\in\A\right\}-L^0_+\right)\cap L^{\infty}$. The above definition of NFLVR is taken from \cite{KK} and can be shown to be equivalent to $\overline{\mathcal{C}}\cap L^{\infty}_+=\{0\}$, with the bar denoting the closure in the norm topology of $L^{\infty}$, as in \cite{DS94}. In the following sections, we shall also examine several other no-arbitrage conditions equivalent to the ones introduced in Definition \ref{arb}.

\section{No Increasing Profit}	\label{S3}

An increasing profit represents an investment opportunity which does not require any initial investment nor any line of credit and, moreover, generates an increasing wealth process, yielding a non-zero final wealth with strictly positive probability. As such, the notion of increasing profit represents the most egregious form of arbitrage and, therefore, should be banned from any reasonable financial model. The following theorem characterises the NIP condition. At no extra cost, we state and prove the result for general locally square-integrable semimartingales, in the sense of Definition II.2.27 of \cite{JS}.

\begin{Thm}	\label{NIP}
The following are equivalent, using the notation introduced in \eqref{char}-\eqref{char-2}:
\begin{itemize}
\item[(i)]
the NIP condition holds;
\item[(ii)]
for every $H\in L(S)$, if $H^{\top}_tc_t=0$ $P\otimes B$-a.e. then $H^{\top}_ta_t=0$ $P\otimes B$-a.e.;
\item[(iii)]
$\nu_t=0$ $P\otimes B$-a.e.
\end{itemize}
\end{Thm}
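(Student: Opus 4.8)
The plan is to first translate the existence of an increasing profit into pointwise, $P\otimes B$-a.e.\ conditions on the processes $a$ and $c$, and then to read off all three equivalences from the decomposition $a = c\lambda+\nu$. So I would begin by analysing what it means for $H\in\A_0$ to generate an increasing profit. Writing $G(H)=H\cdot S=H\cdot A+H\cdot M$, the requirement that $G(H)$ be predictable and increasing (hence of finite variation) forces its local martingale part $H\cdot M$ to be a predictable local martingale of finite variation starting at $0$, hence identically zero; equivalently $\langle H\cdot M\rangle=\int H^{\top}c\,H\,dB\equiv 0$, which, since $c$ takes values in the symmetric nonnegative matrices, is the same as $H^{\top}_tc_t=0$ $P\otimes B$-a.e. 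Once the martingale part vanishes, $G(H)=H\cdot A=\int H^{\top}a\,dB$, so ``$G(H)$ increasing'' reads $H^{\top}a\geq 0$ $P\otimes B$-a.e.\ and ``$P\bigl(G_T(H)>0\bigr)>0$'' reads $H^{\top}a>0$ on a set of strictly positive $P\otimes B$-measure (and $G_0(H)=0$ then gives $0$-admissibility for free). Thus \emph{NIP fails if and only if there exists $H\in L(S)$ with $H^{\top}c=0$, $H^{\top}a\geq 0$, and $H^{\top}a>0$ on a positive-measure set.}

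With this reduction the equivalence (i)$\Leftrightarrow$(ii) is almost immediate. If NIP fails, the witnessing $H$ has $H^{\top}c=0$ and $H^{\top}a\neq 0$ on a positive-measure set, directly contradicting (ii). Conversely, if (ii) fails I would pick $H\in L(S)$ with $H^{\top}c=0$ but $H^{\top}a\neq 0$ somewhere, and replace it by $\tilde H:=\sign(H^{\top}a)\,H\,\ind_{\{H^{\top}a\neq 0\}}$. This keeps $\tilde H^{\top}c=0$, turns the drift into $\tilde H^{\top}a=|H^{\top}a|\geq 0$ with $\tilde H^{\top}a>0$ on a positive-measure set, and preserves membership in $L(S)$ since $v(\tilde H)\leq v(H)$ pointwise (the quadratic form is scaled by a factor in $\{0,1\}$ and $c\geq 0$); hence $\tilde H$ generates an increasing profit and NIP fails.

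For (ii)$\Leftrightarrow$(iii) I would exploit $a=c\lambda+\nu$ with $\nu_t\in\text{Ker}(c_t)$. If $\nu=0$ then $a=c\lambda$, so $H^{\top}c=0$ forces $H^{\top}a=H^{\top}c\,\lambda=0$, which is exactly (ii). Conversely, if $\nu\neq 0$ on a positive-measure set, I would test (ii) with the truncated integrand $H:=\nu\,\ind_{\{|\nu|\leq n\}}$: symmetry of $c$ together with $c\nu=0$ gives $H^{\top}c=0$, while $H^{\top}a=\nu^{\top}(c\lambda+\nu)\ind_{\{|\nu|\leq n\}}=|\nu|^{2}\ind_{\{|\nu|\leq n\}}$, which is strictly positive on a positive-measure set for $n$ large. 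The truncation makes $v(H)\leq n^{2}$, so $\int_0^T v(H)\,dB\leq n^{2}B_T<\infty$ and $H\in L(S)$; this violates (ii). Hence not-(iii) implies not-(ii), completing the cycle.

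The step I expect to be the crux is the reduction in the first paragraph, and specifically the claim that a predictable local martingale of finite variation vanishing at $0$ is identically zero (equivalently, that $\langle H\cdot M\rangle=0$ forces $H\cdot M\equiv 0$, which one gets from $(H\cdot M)^2-\langle H\cdot M\rangle$ being a nonnegative local martingale started at $0$). This is what lets one pass from the dynamic, path-property definition of an increasing profit to the static $P\otimes B$-a.e.\ conditions on $a$ and $c$, and it is also the only place where the locally square-integrable (rather than merely continuous) setting has to be handled with care, through the predictable quadratic variation rather than through continuity of the martingale part. Everything else is linear algebra on $c$, $a$, $\lambda$, $\nu$ plus the routine integrability checks ensuring the constructed integrands lie in $L(S)$.
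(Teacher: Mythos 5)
Your proof is correct and follows the same overall strategy as the paper's: reduce an increasing profit to the pointwise conditions $H^{\top}c=0$ and $H^{\top}a\geq 0$ with $H^{\top}a>0$ on a set of positive $P\otimes B$-measure (the key fact being that a predictable finite-variation local martingale null at $0$ vanishes, which is the paper's appeal to Theorem III.15 of Protter), and then flip signs to pass between ``$H^{\top}a\neq 0$ somewhere'' and ``$H^{\top}a\geq 0$, positive somewhere''. Your multiplier $\sign(H^{\top}a)\ind_{\{H^{\top}a\neq 0\}}$ is the concrete form of the paper's Hahn--Jordan decomposition of $d(H\cdot A)$, so that step coincides in substance. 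The one genuinely different step is \emph{(ii)}$\Rightarrow$\emph{(iii)}: the paper invokes the absolute-continuity result of Theorem 2.3 of Delbaen--Schachermayer (1995b) applied to unit-norm predictable processes, whereas you test \emph{(ii)} directly with the explicit integrand $H=\nu\ind_{\{|\nu|\leq n\}}$, using $\nu_t\in\text{Ker}(c_t)$ and the symmetry of $c_t$ to get $H^{\top}c=0$ and $H^{\top}a=|\nu|^2\ind_{\{|\nu|\leq n\}}$. This is more elementary and self-contained, and it even yields $\neg$\emph{(iii)}$\Rightarrow\neg$\emph{(i)} directly, since that $H$ itself generates an increasing profit. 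The only point worth making explicit in your reduction is that the decomposition $H\cdot S=H\cdot A+H\cdot M$ for a general $H\in L(S)$ requires $H\cdot S$ to be special (Proposition 2 of Jacod, as the paper notes); this holds in both directions of your argument, since $G(H)$ is increasing from $0$ in one direction and $H$ lies in $L^2_{\text{loc}}(M)\cap L^0(A)$ by construction in the other.
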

\begin{proof}
$(i)\Rightarrow (ii)$: Let us define the product space $\overline{\Omega}:=\Omega\times[0,T]$. Suppose that NIP holds and let $H=(H_t)_{0\leq t \leq T}$ be a process in $L(S)$ such that $H_t^{\top}c_t=0$ $P\otimes B$-a.e. (so that $H\cdot M=0$) but $P\otimes B\bigl((\omega,t)\in\overline{\Omega}:H_t^{\top}(\omega)a_t(\omega)\neq 0\bigr)>0$. By the Hahn-Jordan decomposition (see \cite{DS95b}, Theorem 2.1), we can write $H\cdot A=\int(\ind_{D^+}-\ind_{D^-})dV$, where $D^+$ and $D^-$ are two disjoint predictable subsets of $\overline{\Omega}$ such that $D^+\cup D^-=\overline{\Omega}$ and $V:=\var(H\cdot A)$. Let $\psi:=\ind_{D^+}-\ind_{D^-}$ and define the $\R^d$-valued predictable process $\tilde{H}:=\psi H\ind_{\lsir0,T\rsi}$. Due to the associativity of the stochastic integral, it is clear that $\tilde{H}\in L(S)$ and $\tilde{H}\cdot M=0$. Thus, using again the associativity of the stochastic integral,
$$
\tilde{H}\cdot S = \tilde{H}\cdot A = (\psi H)\cdot A
= \psi\cdot (H\cdot A) = \psi^2 \cdot V = V.
$$
The process $V$ is non-negative, increasing and predictable and satisfies $P\left(V_T>0\right)>0$, since $H\cdot A$ is supposed to be not identically zero. Clearly, this amounts to saying that $\tilde{H}$ generates an increasing profit, thus contradicting the assumption that NIP holds. Hence, it must be $H_t^{\top}a_t=0$ $P\otimes B$-a.e. \\
$(ii)\Rightarrow(iii)$: let $H=(H_t)_{0\leq t \leq T}$ be an $\R^d$-valued predictable process such that $\|H_t(\omega)\|\in\{0,1\}$ for all $(\omega,t)\in\overline{\Omega}$. Since $H_t^{\top}c_t=0$ $P\otimes B$-a.e. implies that $H_t^{\top}a_t=0$ $P\otimes B$-a.e., condition \emph{(iii)} follows directly from the absolute continuity result of Theorem 2.3 of \cite{DS95b}.	\\
$(iii)\Rightarrow(i)$: suppose that $\nu_t=0$ $P\otimes B$-a.e. and let $H\in\A_0$ generate an increasing profit. The process $G(H)=H\cdot S$ is predictable and increasing, hence of finite variation. In particular, $H\cdot S$ is a special semimartingale and, hence, due to Proposition 2 of \cite{Ja}, we can write $H\cdot S=H\cdot A+H\cdot M$. This implies that $H\cdot M=H\cdot S-H\cdot A$ is also predictable and of finite variation. Theorem III.15 of \cite{Pr} then implies that $H\cdot M=0$, being a predictable local martingale of finite variation. Hence:
$$
G_t(H) 
= (H\cdot A)_t
= \int_0^t\!H^{\top}_ua_u\,dB_u
= \int_0^t\!H^{\top}_uc_u\lambda_u\,dB_u
= \int_0^t\!d\langle H\cdot M,M\rangle_{\!u}\lambda_u = 0
\qquad\; \text{$P$-a.s.}
$$
for all $t\in[0,T]$. In particular, $P\bigl(G_T(H)>0\bigr)=0$, thus contradicting the hypothesis that $H$ generates an increasing profit. 
\end{proof}

Clearly, $\nu_t=0$ $P\otimes B$-a.e. means that $dA\ll d\langle M,M\rangle$. The latter condition is known in the literature as the \emph{weak structure condition} and the process $\lambda$ is usually referred to as the \emph{instantaneous market price of risk} (see e.g. \cite{HS}, Section 3). We want to point out that results similar to Theorem \ref{NIP} have already appeared in the literature, albeit under stronger assumptions. In particular, Theorem 3.5 of \cite{DS95b} (see also \cite{PS}, Theorem 1, and Appendix B of \cite{KarShr}) shows that $dA\ll d\langle M,M\rangle$ holds under the classical NA condition, which is strictly stronger than NIP (see Section \ref{S6}). Somewhat more generally, \cite{KS} (see also \cite{Hulley}, Theorem 1.13) prove that $dA\ll d\langle M,M\rangle$ holds under the NSA condition, which is also strictly stronger than the NIP condition (see Section \ref{S4}). Theorem \ref{NIP} shows that the weak structure condition $dA\ll d\langle M,M\rangle$ is exactly equivalent to the NIP condition, which represents an indispensable requirement for any reasonable financial market.

\begin{Rems}	\label{rem-min}
{\textbf{1)} }
As can be seen by inspecting the proof of Theorem \ref{NIP}, the NIP condition is also equivalent to the absence of elements $H\in\A_0$ such that the gains from trading process $G(H)$ is predictable and of finite variation (not necessarily increasing) and satisfies $P(G_T(H)>0)>0$.

{\textbf{2)} }
In general, as long as the NIP condition holds, there may exist many $\R^d$-valued predictable processes $\gamma=(\gamma_t)_{0\leq t \leq T}$ such that $dA_t=d\langle M,M\rangle_{\!t}\gamma_t$. However, for any such process $\gamma$, elementary linear algebra gives $\Pi_{c_t}(\gamma_t)=\lambda_t$, where we denote by $\Pi_{c_t}(\cdot)$ the orthogonal projection onto the range of the matrix $c_t$, for $t\in[0,T]$. In turn, this implies that $\int_0^T\!\!\gamma_t^{\top}\!c_t\gamma_tdB_t\geq\int_0^T\!\!\lambda_t^{\top}\!c_t\lambda_tdB_t$, thus showing the \emph{minimality} property of the process $\lambda$ introduced in \eqref{char-2}.
\end{Rems}

\section{No Strong Arbitrage}	\label{S4}

A strong arbitrage opportunity consists of an investment opportunity which does not require any initial capital nor any line of credit and leads to a non-zero final wealth with strictly positive probability. Of course, this sort of strategy should be banned from any reasonable financial market, since every agent would otherwise benefit in an unlimited way from a strong arbitrage opportunity (compare also with the discussion in Section \ref{S7}). According to Definition \ref{arb}, it is evident that an increasing profit generates a strong arbitrage opportunity. Two examples of models which satisfy NIP but allow for strong arbitrage opportunities will be presented in Section \ref{examples}, thus showing that NSA is strictly stronger than NIP.

Let us now introduce another notion of arbitrage, which has been first formulated in \cite{DS95b} and turns out to be equivalent to the notion of strong arbitrage opportunity.

\begin{Def}	\label{IA}
An element $H\in\A_0$ generates an \emph{immediate arbitrage opportunity} if there exists a stopping time $\tau$ such that $P(\tau<T)>0$ and if $H=H\ind_{\lsir\tau,T\rsi}$ and $G_t(H)>0$ $P$-a.s. for all $t\in(\tau,T]$. If there exists no such $H\in\A_0$ we say that the \emph{No Immediate Arbitrage (NIA)} condition holds.
\end{Def}

We then have the following simple lemma (compare also with \cite{DS95b}, Lemma 3.1).

\begin{Lem}	\label{NIAO-NSA}
The NSA condition and the NIA condition are equivalent.
\end{Lem}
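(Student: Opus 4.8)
The plan is to prove the two implications of the logical equivalence directly, since both NSA and NIA are stated as the \emph{absence} of a certain kind of strategy in $\A_0$. One direction, that NIA implies NSA, will follow almost immediately from the definitions: an immediate arbitrage opportunity is, after the stopping time $\tau$, a strategy whose gains process is strictly positive and hence yields $P(G_T(H)>0)>0$, so any immediate arbitrage is in particular a strong arbitrage. Thus the contrapositive gives NSA $\Rightarrow$ NIA trivially, and the substantive content lies in the reverse implication, NIA $\Rightarrow$ NSA, equivalently: if a strong arbitrage opportunity exists, then an immediate arbitrage opportunity exists.

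**The main direction.** First I would assume NSA fails, so there is $H\in\A_0$ with $P(G_T(H)>0)>0$. The idea is to localise the strong arbitrage to the first instant at which the gains process becomes strictly positive and to show that a suitably modified strategy produces gains that stay strictly positive thereafter. Concretely, I would consider the stopping time $\tau:=\inf\{t\in[0,T]:G_t(H)>0\}$ (with the usual convention $\inf\emptyset=\infty$), which is a stopping time by right-continuity of $\FF$ and continuity of $G(H)$. On the event $\{\tau<T\}$, which has positive probability since $P(G_T(H)>0)>0$ and $G(H)$ is continuous with $G_0(H)=0$, the gains process first strictly exceeds zero. The strategy $\tildeH:=H\ind_{\lsir\tau,T\rsi}$ again lies in $\A_0$ (it is still $0$-admissible because $G(\tildeH)=G(H)-G(H)^\tau\ge -G_\tau(H)=0$ on $\{t\ge\tau\}$, using $G_\tau(H)=0$ by continuity), and it is supported on $\lsir\tau,T\rsi$ as required by Definition \ref{IA}.

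**The obstacle.** The delicate point is arranging that the gains be \emph{strictly} positive for \emph{all} $t\in(\tau,T]$ $P$-a.s., not merely positive at some later time. The raw strategy $\tildeH$ only guarantees $G_t(\tildeH)\ge 0$ and $G_t(\tildeH)>0$ somewhere; it may return to zero. To fix this I would exploit that, under NIP (which holds since NSA is stronger than NIP by Theorem \ref{NIP} and the implications \eqref{impl}), the finite-variation part $A$ is absolutely continuous with respect to $d\langle M,M\rangle$, so the gains decompose with a genuine market-price-of-risk drift $\lambda$, and the behaviour near the first sign change is controlled by the local structure of the drift relative to the martingale part. This is exactly the mechanism in Lemma 3.1 and the surrounding analysis of \cite{DS95b}: one shows that if the wealth can turn strictly positive without an immediate arbitrage, then it could be pushed strictly positive from $\tau$ onward by rescaling the integrand on the set where the drift dominates, contradicting the assumed absence of immediate arbitrage. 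I expect the heart of the proof to be this passage from ``positive somewhere after $\tau$'' to ``strictly positive throughout $(\tau,T]$'', invoking the cited results of \cite{DS95b} on the dichotomy of the gains process at its first strictly-positive time; the rest is bookkeeping on stopping times and admissibility.
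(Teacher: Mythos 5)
You have the right skeleton (the trivial direction, the stopping time $\tau:=\inf\{t:G_t(H)>0\}$, the restriction of $H$ to $\lsir\tau,T\rsi$, and you correctly isolate the real difficulty: upgrading ``$G\geq 0$ and positive somewhere after $\tau$'' to ``$G>0$ on all of $(\tau,T]$''), but at precisely that point the proposal stops being a proof. Your suggested resolution --- invoking NIP, the weak structure condition and a ``rescaling of the integrand on the set where the drift dominates'', with a wholesale appeal to Lemma 3.1 of \cite{DS95b} --- is both unexecuted and off-track: this lemma is entirely soft and never touches the semimartingale characteristics of $S$. The missing idea is a countable averaging over horizons. Fix a sequence $\{\theta_n\}_{n\in\N}$ dense in $(0,1)$ and set $\tilde{H}:=\sum_{n=1}^{\infty}2^{-n}H\ind_{\lsir\tau,\,(\tau+\theta_n)\wedge T\rsi}$. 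Because $H\in\A_0$ forces $G_t(H)\geq 0$ for \emph{all} $t$ and $G_{\tau}(H)=0$ by continuity, every summand of $G_{\tau+\varepsilon}(\tilde{H})=\sum_n 2^{-n}\bigl(G_{(\tau+(\varepsilon\wedge\theta_n))\wedge T}(H)-G_{\tau}(H)\bigr)$ is nonnegative; and since by definition of $\tau$ some $s\in(\tau,\tau+\varepsilon)$ has $G_s(H)>0$, continuity plus density of $\{\theta_n\}$ make at least one summand strictly positive. Hence $G_{\tau+\varepsilon}(\tilde{H})>0$ $P$-a.s.\ on $\{\tau<T\}$ for every $\varepsilon>0$, which is exactly an immediate arbitrage. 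Note that the crucial input is $0$-admissibility (so that no summand can be negative and spoil the sum), not any drift condition; your single strategy $H\ind_{\lsir\tau,T\rsi}$ indeed cannot work on its own, and no local analysis of $\lambda$ is needed to repair it.

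A secondary point: your opening sentence announces that ``NIA implies NSA follows immediately from the definitions'' and then derives ``NSA $\Rightarrow$ NIA'' as the contrapositive of ``immediate arbitrage $\Rightarrow$ strong arbitrage''. The contrapositive of that implication is NSA $\Rightarrow$ NIA, so the labels in the first sentence are swapped; the substantive direction you then attack (strong arbitrage $\Rightarrow$ immediate arbitrage, i.e.\ NIA $\Rightarrow$ NSA) is the correct one, but the statement should be cleaned up so the two directions are not announced inconsistently.
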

\begin{proof}
Suppose that $H\in\A_0$ generates a strong arbitrage opportunity and define the stopping time $\tau:=\inf\{t\in[0,T]:G_t(H)>0\}\wedge T$. Since $P\left(G_T(H)>0\right)>0$, we have $P(\tau<T)>0$. For a sequence $\{\theta_n\}_{n\in\N}$ dense in $(0,1)$, let us define the process $\tilde{H}:=\sum_{n=1}^{\infty}2^{-n}H\ind_{\lsir\tau,\,(\tau+\theta_n)\wedge T\rsi}$. Clearly, we have $\tilde{H}\in\A_0$. Furthermore, on the event $\{\tau<T\}$ it holds that, for every $\varepsilon>0$,
$$
G_{\tau+\varepsilon}(\tilde{H})
= (\tilde{H}\cdot S)_{\tau+\varepsilon}
= \sum_{n=1}^{\infty}\frac{1}{2^n}\left((H\cdot S)_{\left(\tau+(\varepsilon\wedge\theta_n)\right)\wedge T}-(H\cdot S)_{\tau}\right)
= \sum_{n=1}^{\infty}\frac{1}{2^n}(H\cdot S)_{\left(\tau+(\varepsilon\wedge\theta_n)\right)\wedge T} > 0
\; \text{$P$-a.s.}
$$
thus showing that $\tilde{H}$ generates an immediate arbitrage opportunity at the stopping time $\tau$. Conversely, Definitions \ref{arb}-\emph{(ii)} and \ref{IA} directly imply that an immediate arbitrage opportunity is also a strong arbitrage opportunity.
\end{proof}

Recall that, due to Theorem \ref{NIP}, the NIP condition is equivalent to $a=c\,\lambda$ $P\otimes B$-a.e., where the processes $a$, $c$, $\lambda$ and $B$ are as in \eqref{char}-\eqref{char-2}. Since NSA (or, equivalently, NIA) is stronger than NIP, it is natural to expect that NSA will imply some additional properties of the process $\lambda$. This is confirmed by the next theorem. As a preliminary, let us define the \emph{mean-variance trade-off process} $\hatK=(\hatK_t)_{0\leq t \leq T}$ as
\be	\label{MVT}
\hatK_t := \int_0^t\!\lambda^{\top}_ud\langle M,M\rangle_{\!u}\lambda_u
= \int_0^t\!\lambda^{\top}_uc_u\lambda_u\,dB_u
= \int_0^t\!a^{\top}_uc^+_ua_u\,dB_u,
\qquad \text{for }t\in[0,T].
\ee
Let also $\hatK_s^t:=\hatK_t-\hatK_s$, for $s,t\in\left[0,T\right]$ with $s\leq t$. Following \cite{LS} and \cite{Str}, we also define the stopping time
$$
\alpha:=\inf\bigl\{t\in[0,T]:\hatK_t^{t+h}=\infty,\forall h\in\left(0,T-t\right]\bigr\},
$$
with the usual convention $\inf\emptyset=\infty$.
The next theorem is essentially due to \cite{Str}, but we opt for a slightly different proof.

\begin{Thm}	\label{NSA}
The NSA condition holds if and only if $\nu_t=0$ $P\otimes B$-a.e. and $\alpha=\infty$ $P$-a.s.
\end{Thm}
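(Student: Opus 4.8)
The plan is to prove the two implications separately, in each case reducing to the No Immediate Arbitrage condition via Lemma \ref{NIAO-NSA} and exploiting the Dambis--Dubins--Schwarz time change together with the behaviour of Brownian motion at the origin. First, assume NSA holds. Since every increasing profit is in particular a strong arbitrage opportunity, NSA implies NIP, so Theorem \ref{NIP} already yields $\nu_t=0$ $P\otimes B$-a.e., equivalently the structure condition $a=c\lambda$. It then remains to show $\alpha=\infty$ $P$-a.s., and I would argue by contradiction: suppose $P(\alpha<T)>0$. On $\{\alpha<T\}$ the process $\hatK$ explodes instantaneously after $\alpha$, i.e. $\hatK_\alpha^{\alpha+h}=\infty$ for every $h>0$, so the measure $d\hatK=\lambda^{\top}c\,\lambda\,dB$ carries infinite mass on every right-neighbourhood of $\alpha$. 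The crux is to manufacture from this explosion an immediate arbitrage at $\alpha$, contradicting NIA and hence NSA.

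For the construction I would take $H_0:=\beta\lambda\ind_{\lsir\alpha,T\rsi}$ for a strictly positive predictable scaling $\beta$, chosen so that (i) $\Theta_t:=\langle H_0\cdot M\rangle_t=\int_\alpha^t\beta^2\,d\hatK$ stays finite, while (ii) the drift $D_t:=(H_0\cdot A)_t=\int_\alpha^t\beta\,d\hatK$ (using $a=c\lambda$) dominates the martingale fluctuations near $\alpha$. Concretely, expressing $D$ in the clock $\theta=\Theta_t$ the drift rate is $\beta^{-1}$, and I would arrange $\beta^{-1}$ of order $\theta^{-3/4}$, so that $D(\theta)\sim\theta^{1/4}$ and hence $D(\theta)/\sqrt{\theta\log\log(1/\theta)}\to\infty$ as $\theta\downarrow0$. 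By Dambis--Dubins--Schwarz, $H_0\cdot M=W_{\Theta}$ for a Brownian motion $W$ with $W_0=0$, and the iterated-logarithm bound forces $W_\theta>-\tfrac12 D(\theta)$ for all small $\theta$, so the stopping time $\sigma:=\inf\{t>\alpha:D_t+(H_0\cdot M)_t\le\tfrac12 D_t\}$ satisfies $\sigma>\alpha$ $P$-a.s. on $\{\alpha<T\}$. Setting $H:=H_0\ind_{\lsir\alpha,\sigma\rsi}\ind_{\{\alpha<T\}}$ (freezing the position after $\sigma$) gives $G_t(H)>\tfrac12 D_t>0$ on $(\alpha,\sigma]$ and $G_t(H)=G_\sigma(H)>0$ afterwards, hence $G_t(H)>0$ for all $t\in(\alpha,T]$ with $G(H)\ge0$ throughout, i.e. $H\in\A_0$ is an immediate arbitrage. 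I expect this construction---exhibiting a predictable $\beta$ with exactly these asymptotics and verifying $\sigma>\alpha$---to be the main obstacle of the whole proof.

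Conversely, assume $\nu_t=0$ $P\otimes B$-a.e. (so $a=c\lambda$) and $\alpha=\infty$ $P$-a.s.; I would establish NIA, which is NSA by Lemma \ref{NIAO-NSA}. Suppose instead there is an immediate arbitrage $H=H\ind_{\lsir\tau,T\rsi}$ at a stopping time $\tau$ with $P(\tau<T)>0$ and $G_t(H)>0$ for all $t\in(\tau,T]$. Writing $G(H)=D+L$ with $D_t=\int_\tau^t H^{\top}c\,\lambda\,dB$ and $L=H\cdot M$, the Kunita--Watanabe/Cauchy--Schwarz inequality gives $D_t\le\langle L\rangle_t^{1/2}(\hatK_t-\hatK_\tau)^{1/2}$, and one checks $\langle L\rangle_t>0$ for every $t>\tau$ (otherwise $L$ and $D$ both vanish, contradicting $G_t>0$). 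Since $\alpha=\infty$, on a positive-probability subset $F\subseteq\{\tau<T\}$ there are fixed $h_0>0$ and $\kappa_0<\infty$ with $\hatK_\tau^{\tau+h_0}\le\kappa_0$; on $F$ and for $t\in(\tau,\tau+h_0]$ this forces $L_t>-\sqrt{\kappa_0}\,\langle L\rangle_t^{1/2}$. By Dambis--Dubins--Schwarz $L=W_{\langle L\rangle}$, and reading this in the clock $s=\langle L\rangle_t\downarrow0$ it says $W_s>-\sqrt{\kappa_0 s}$ for all small $s$, contradicting $\liminf_{s\downarrow0}W_s/\sqrt{s}=-\infty$. Hence no immediate arbitrage exists and NSA holds. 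The symmetry between the two directions---Cauchy--Schwarz together with the near-origin behaviour of Brownian motion under the time change by $\langle L\rangle$---is precisely what singles out the mean-variance trade-off $\hatK$ and the explosion time $\alpha$ as the right objects.
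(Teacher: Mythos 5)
Your converse direction (structure condition plus $\alpha=\infty$ implies NSA) is essentially sound, and it takes a genuinely different route from the paper: you combine the Kunita--Watanabe bound $|H\cdot A|_t\leq\langle H\cdot M\rangle_t^{1/2}(\hatK_{\tau}^t)^{1/2}$ with a Dambis--Dubins--Schwarz time change and $\liminf_{s\downarrow0}W_s/\sqrt{s}=-\infty$, whereas the paper localises $\hatK$ by the stopping times $\rho_n=\inf\{t>\tau:\hatK_{\tau}^t\geq n\}$, builds true martingales $\mathcal{E}(-\lambda\ind_{\lsir\tau,\rho_n\rsi}\cdot M)$ via Novikov, and shows by integration by parts that the deflated gains process is a non-negative local martingale started at zero, hence identically zero. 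Your argument is closer in spirit to Delbaen--Schachermayer and Kabanov--Stricker; the paper's avoids time-change technicalities and has the added benefit of foreshadowing the weak martingale deflator $\hatZ$ used later.

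The forward direction, however, has a genuine gap. The paper does not construct the immediate arbitrage from the explosion of $\hatK$ at all: it quotes Theorem 3.6 of Delbaen--Schachermayer (1995b), which is a substantial theorem in its own right. Your one-paragraph construction does not substitute for it, because the scaling $\beta$ is defined self-referentially: you prescribe $\beta$ as a function of the clock $\Theta_t=\int_{\alpha}^t\beta^2\,d\hatK$, which itself depends on $\beta$. With your choice $\beta\sim\Theta^{3/4}$ the consistency requirement reads $d\Theta=\Theta^{3/2}\,d\hatK$ with $\Theta_{\alpha}=0$; since $\int_0^{\varepsilon}\theta^{-3/2}d\theta=\infty$, the only solution emanating from $0$ is $\Theta\equiv0$ (and driving the equation by $d\hatK$, which has \emph{infinite} mass on every interval $(\alpha,\alpha+h]$, only produces an indeterminate $\infty-\infty$). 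So the proposed strategy is either trivial or not well defined, and in particular you have not exhibited a predictable $\beta$ with $\int\beta^2d\hatK<\infty$, $\int\beta\,d\hatK>0$, and the drift dominating the iterated-logarithm envelope. Producing such a weight when $\hatK_{\alpha}^{\alpha+h}=\infty$ for \emph{every} $h>0$ (so that no weight of the form $g(\hatK_{\alpha}^{\cdot})$ or $g(\Theta)$ can be written down directly) is exactly the hard content of the cited theorem; it requires an adaptive construction via stopping times rather than a prescribed power-law asymptotic. Either carry out that construction in full or, as the paper does, invoke Theorem 3.6 of Delbaen--Schachermayer together with Lemma \ref{NIAO-NSA}.
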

\begin{proof}
Suppose first that NSA holds. Since NSA implies NIP, Theorem \ref{NIP} gives that $\nu_t=0$ $P\otimes B$-a.e. The fact that $\alpha=\infty$ $P$-a.s. then follows from Theorem 3.6 of \cite{DS95b} together with Lemma \ref{NIAO-NSA} (compare also with \cite{KS}, Sections 3-4). \\
Conversely, suppose that $\nu_t=0$ $P\otimes B$-a.e. and $\alpha=\infty$ $P$-a.s. and let $H\in\A_0$ generate a strong arbitrage opportunity. Due to Lemma \ref{NIAO-NSA}, we can equivalently suppose that $H$ generates an immediate arbitrage opportunity with respect to a stopping time $\tau$ with $P(\tau<T)>0$. Since $P(\alpha=\infty)=1$, we have $P\bigl(\hatK_{\tau}^{\tau+h}=\infty,\forall h\in(0,T-\tau]\bigr)=0$. For each $n\in\N$, define the stopping time $\rho_n:=\inf\bigl\{t>\tau:\hatK_{\tau}^t\geq n\bigr\}\wedge T$. Since $\hatK$ is continuous and does not jump to infinity, it is clear that $\rho_n>\tau$ $P$-a.s. on the set $\{\tau<T\}$, for all $n\in\N$. Let us then define the predictable process $\lambda^{\tau,n}:=\lambda\ind_{\lsir\tau,\rho_n\rsi}$, for every $n\in\N$. Then, on the set $\{\tau<T\}$,
$$
\int_0^T\!(\lambda^{\tau,n}_t)^{\top}d\langle M,M\rangle_{\!t}\lambda^{\tau,n}_t
= \int_0^T\!\!\ind_{\lsir\tau,\rho_n\rsi}\lambda^{\top}_td\langle M,M\rangle_{\!t}\lambda_t
= \hatK_{\tau}^{\rho_n}\leq n
\qquad\text{$P$-a.s.}
$$
For every $n\in\N$, we can define the stochastic exponential $\hatZ^{\tau,n}:=\mathcal{E}(-\lambda^{\tau,n}\cdot M)$ as a strictly positive process in $\M^c$, due to Novikov's condition. It is obvious that $\hatZ^{\tau,n}=1$ on $\lsi 0,\tau\rsi$ and $\hatZ^{\tau,n}=\hatZ^{\tau,n}_{\rho_n}$ on $\lsi\rho_n,T\rsi$. We now apply the integration by parts formula to $\hatZ^{\tau,n}(H\cdot S)^{\rho_n}$, where $(H\cdot S)^{\rho_n}$ denotes the process $H\cdot S$ stopped at $\rho_n$, and use the fact that $\hatZ^{\tau,n}_td(H\cdot A)^{\rho_n}_t=\hatZ^{\tau,n}_tH_t^{\top}dA^{\rho_n}_t$, since $\hatZ^{\tau,n}\in L(H\cdot A)$ (being $\hatZ^{\tau,n}$ adapted and continuous, hence predictable and locally bounded), and the fact that $dA=d\langle M,M\rangle\lambda$ and $H=H\ind_{\lsir\tau,T\rsi}$:
$$	\ba
d\bigl(\hatZ^{\tau,n}_t(H\cdot S)^{\rho_n}_t\bigr)
&= \hatZ^{\tau,n}_td(H\cdot S)^{\rho_n}_t+(H\cdot S)^{\rho_n}_td\hatZ^{\tau,n}_t
+d\bigl\langle\hatZ^{\tau,n},(H\cdot S)^{\rho_n}\bigr\rangle_{\!t}	\\
&= \hatZ^{\tau,n}_td(H\cdot M)^{\rho_n}_t+\hatZ^{\tau,n}_td\left(H\cdot A\right)^{\rho_n}_t
+(H\cdot S)^{\rho_n}_td\hatZ^{\tau,n}_t
-\hatZ^{\tau,n}_tH_t^{\top}d\langle M,M\rangle_{\!t}\lambda^{\tau,n}_t	\\
&= \hatZ^{\tau,n}_td(H\cdot M)^{\rho_n}_t+(H\cdot S)^{\rho_n}_td\hatZ^{\tau,n}_t
+\hatZ^{\tau,n}_tH_t^{\top}\bigl(dA^{\rho_n}_t-d\langle M,M\rangle_{\!t}\lambda^{\tau,n}_t\bigr)	\\
&= \hatZ^{\tau,n}_td(H\cdot M)^{\rho_n}_t+(H\cdot S)^{\rho_n}_td\hatZ^{\tau,n}_t.
\ea	$$
This shows that $\hatZ^{\tau,n}(H\cdot S)^{\rho_n}$ is a non-negative local martingale and, by Fatou's lemma, also a supermartingale, for every $n\in\N$. Since $\hatZ^{\tau,n}_0(H\cdot S)^{\rho_n}_0=0$, the supermartingale property implies that $\hatZ^{\tau,n}_t(H\cdot S)^{\rho_n}_t=0$ for all $t\in[0,T]$ $P$-a.s., meaning that $H\cdot S=0$ $P$-a.s. on $\bigcup_{n\in\N}\lsi0,\rho_n\rsi$. Since $\rho_n>\tau$ $P$-a.s. on $\{\tau<T\}$ and $P(\tau<T)>0$, this contradicts the fact that $(H\cdot S)_t>0$ $P$-a.s. for all $t\in(\tau,T]$, thus showing that there cannot exist an immediate arbitrage opportunity. Equivalently, due to Lemma \ref{NIAO-NSA}, the NSA condition holds.
\end{proof}

Theorem \ref{NSA} shows that NSA holds as long as the mean-variance trade-off process $\hatK$ does not jump to infinity (however, $\hatK_T$ is not guaranteed to be finite). In particular, it is important to remark that we can check whether a financial market allows for strong arbitrage opportunities by looking only at the semimartingale characteristics of the discounted price process $S$. 

We now introduce the important concept of \emph{(weak) martingale deflator}, which represents a weaker counterpart to the density process of an equivalent local martingale measure (see Remark \ref{ELMM-defl}) and corresponds to the notion of \emph{martingale density} introduced in \cite{S1}-\cite{S2}.

\begin{Def}	\label{weak-defl}
Let $Z=\left(Z_t\right)_{0\leq t \leq T}$ be a non-negative local martingale with $Z_0=1$. We say that $Z$ is a \emph{weak martingale deflator} if the product $ZS^i$ is a local martingale, for all $i=1,\ldots,d$. If $Z$ satisfies in addition $Z_T>0$ $P$-a.s. we say that $Z$ is a \emph{martingale deflator}.	\\
A (weak) martingale deflator $Z$ is said to be \emph{tradable} if there exists a sequence $\{\theta^n\}_{n\in\N}\subseteq\A_1$ and a sequence $\{\tau_n\}_{n\in\N}$ of stopping times increasing $P$-a.s. to $\tau:=\inf\bigl\{t\in[0,T]:Z_t=0\text{ or }Z_{t-}=0\bigr\}$ such that $1/Z^{\tau_n}=V(1,\theta^n)$ $P$-a.s., for every $n\in\N$.
\end{Def}

\begin{Rem}	\label{Remdefl}
Fatou's lemma implies that any weak martingale deflator $Z$ is a supermartingale (and also a true martingale if and only if $E[Z_T]=1$). Furthermore, if $Z$ is a martingale deflator, so that $Z_T>0$ $P$-a.s., the minimum principle for non-negative supermartingales (see e.g. \cite{RY}, Proposition II.3.4) implies that $\tau=\infty$ $P$-a.s. It can be verified that a martingale deflator is tradable if and only if there exists a strategy $\theta\in\A_1$ such that $1/Z=V(1,\theta)$ (indeed, it suffices to define $\theta:=\sum_{n=1}^{\infty}\!\theta^n\ind_{\lsir\tau_{n-1},\tau_n\rsi}$, with $\tau_0:=0$). This also explains the meaning of the terminology \emph{tradable}\footnote{To the best of our knowledge, for a weak martingale deflator $Z$, the definition of tradability as in Definition \ref{weak-defl} seems to be new, but is related to condition {\bf H} from \cite{KS}.}.
\end{Rem}

We denote by $\Dweak$ and $\D$ the families of all weak martingale deflators and of all martingale deflators, respectively. The next lemma shows the fundamental property of (weak) martingale deflators. At little extra cost, we state and prove the result for the case of general (possibly discontinuous and non-locally bounded) semimartingales (we refer to Section III.6e of \cite{JS} and to \cite{Kall} for the definition and the main properties of $\sigma$-martingales). The result is more or less well-known but, for the convenience of the reader, we give a detailed proof in the Appendix.

\begin{Lem}	\label{sigma}
Let $Z\in\mDweak$. Then, for any $H\in L(S)$, the product $Z(H\cdot S)$ is a $\sigma$-martingale.
\linebreak If in addition $H\in\A$, then $Z(H\cdot S)\in\mMloc$.
\end{Lem}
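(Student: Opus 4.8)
The plan is to reduce the statement to an integration-by-parts computation that exhibits $Z(H\cdot S)$ as a sum of stochastic integrals against local martingales, and then to appeal to the general theory of $\sigma$-martingales. The elementary building block is the following: for each $i=1,\dots,d$ the process $N^i:=Z_-\cdot S^i+[Z,S^i]$ belongs to $\Mloc$. Indeed, the product rule gives $ZS^i=Z_0S^i_0+S^i_-\cdot Z+Z_-\cdot S^i+[Z,S^i]$, and $ZS^i\in\Mloc$ by the definition of a weak martingale deflator, while $S^i_-\cdot Z\in\Mloc$ because $S^i_-$ is left-continuous and adapted, hence locally bounded, and $Z\in\Mloc$. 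Rearranging yields $N^i=ZS^i-Z_0S^i_0-S^i_-\cdot Z\in\Mloc$. Writing $N=(N^1,\dots,N^d)^{\top}$, we obtain an $\R^d$-valued local martingale.

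Next, for $H\in L(S)$ I would apply the product rule to $Z(H\cdot S)$ and invoke the associativity of the vector stochastic integral (both for the integral and for the quadratic covariation). With the standard normalisation $(H\cdot S)_0=0$ and $Z_0=1$,
$$
Z(H\cdot S) = (H\cdot S)_-\cdot Z + Z_-\cdot(H\cdot S) + [Z,H\cdot S] = (H\cdot S)_-\cdot Z + (Z_-H)\cdot S + H\cdot[Z,S] = (H\cdot S)_-\cdot Z + H\cdot N.
$$
The first summand lies in $\Mloc$, since $(H\cdot S)_-$ is locally bounded and $Z\in\Mloc$. The second summand $H\cdot N$ is a stochastic integral of a predictable process against the local martingale $N$, and any such integral is a $\sigma$-martingale (see Section III.6e of \cite{JS} and \cite{Kall}). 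As the sum of a local martingale and a $\sigma$-martingale is again a $\sigma$-martingale, the first assertion follows.

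For the second assertion I would exploit admissibility. If $H\in\A$, say $H\in\A_a$, then $(H\cdot S)+a\geq0$, so $Z\bigl((H\cdot S)+a\bigr)=Z(H\cdot S)+aZ\geq0$, being the product of the non-negative process $Z$ with a non-negative process. By the first assertion this is a $\sigma$-martingale (the sum of the $\sigma$-martingale $Z(H\cdot S)$ and the local martingale $aZ$), and it is bounded from below by the constant $0$; by the Ansel--Stricker theorem a $\sigma$-martingale bounded from below by a constant is a local martingale. Hence $Z\bigl((H\cdot S)+a\bigr)\in\Mloc$, and subtracting the local martingale $aZ$ gives $Z(H\cdot S)\in\Mloc$.

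The main obstacle I anticipate lies entirely in the first assertion and is concentrated in justifying, in the full generality of possibly discontinuous and non-locally-bounded $S$, that $H$ is integrable against $N$, so that $H\cdot N$ is a bona fide stochastic integral and hence a $\sigma$-martingale. Concretely one must verify $Z_-H\in L(S)$ — which is immediate, as $Z_-$ is locally bounded predictable and $H\in L(S)$ — together with the integrability of $H$ against the finite-variation process $[Z,S]$. The continuous part of $[Z,S]$ is controlled by the Kunita--Watanabe inequality in combination with $H\in L^2_{\text{loc}}(M)$ and the almost-sure boundedness of $Z_-$ on $[0,T]$, whereas the jump part requires a separate estimate relating $\sum|H^{\top}\Delta Z\,\Delta S|$ to the square-integrability controls already available. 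Once this integrability is secured, the associativity identities used above are legitimate and the remainder of the argument is routine.
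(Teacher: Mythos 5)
Your algebra is in fact the same decomposition as the paper's: unpacking the paper's identity $Z(H\cdot S)=K\cdot Y$ with $Y=(ZS^1,\dots,ZS^d,Z)^{\top}$, $K^i=H^i$ and $K^{d+1}=(H\cdot S)_--H^{\top}S_-$ gives exactly your $(H\cdot S)_-\cdot Z+H\cdot N$ with $N^i=Z_-\cdot S^i+[Z,S^i]=ZS^i-Z_0S^i_0-S^i_-\cdot Z\in\Mloc$. Where you genuinely diverge is in how the limiting object is legitimised. The paper never manipulates the non-locally-bounded $H$ directly: it truncates to $H(n)=H\ind_{\{\|H\|\leq n\}}$, for which every step (associativity, $[Z,H(n)\cdot S]=H(n)\cdot[Z,S]$, recombination into $K(n)\cdot Y$) is unproblematic, and then obtains $Z(H\cdot S)=K\cdot Y$ with $K\in L(Y)$ from the closedness of $\{K\cdot Y:K\in L(Y)\}$ in the Emery topology. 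The integrability of $K$ with respect to the local martingale $Y$ is thus a \emph{conclusion} of M\'emin's closedness theorem, not something checked by hand.

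This is precisely the step your proposal leaves open, and it is a genuine gap rather than a routine verification: you yourself flag it ("requires a separate estimate") and do not supply it. Concretely, to invoke "a stochastic integral against a local martingale is a $\sigma$-martingale" you need $H\in L(N)$ for the $\R^d$-valued local martingale $N$, together with the identity $H\cdot N=(Z_-H)\cdot S+[Z,H\cdot S]$. For non-locally-bounded $H$ and discontinuous $Z$, $S$, neither follows from $H\in L(S)$ by inspection. The gap is closable --- the continuous part is handled by Kunita--Watanabe as you say, and the jump part by the Cauchy--Schwarz bound $\sum_s|\Delta Z_s|\,|H_s^{\top}\Delta S_s|\leq\bigl(\sum_s(\Delta Z_s)^2\bigr)^{1/2}\bigl(\sum_s(\Delta(H\cdot S)_s)^2\bigr)^{1/2}<\infty$, which controls the \emph{contracted} integral $\int H^{\top}d[Z,S]$ but not the componentwise Stieltjes integrals $\int H^i\,d[Z,S^i]$ --- so the whole argument must be carried out inside the vector-integration formalism of \cite{SC} and \cite{Ja}, and the passage from integrability with respect to the semimartingales $Z_-\cdot S$ and $[Z,S]$ separately to $H\in L(N)$ (in the sense required for the $\sigma$-martingale characterisation, Proposition III.6.42 of \cite{JS}) must be made explicit. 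As written, the first assertion is not proved; the truncation-plus-Emery-closedness device in the paper exists exactly to avoid this. Your proof of the second assertion (non-negative $\sigma$-martingale $\Rightarrow$ local martingale and supermartingale, via Ansel--Stricker or \cite{Kall}) is correct and identical to the paper's.
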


If $Z\in\Dweak$ and $H\in\A_1$, Lemma \ref{sigma} implies that $Z(1+H\cdot S)$ is a non-negative local martingale and, hence, a supermartingale. This means that $Z$ is a \emph{$P$-supermartingale density}, according to the terminology adopted in \cite{Be}. If we also have $Z_T>0$ $P$-a.s., i.e., $Z\in\D$, then $Z$ is an \emph{equivalent supermartingale deflator} in the sense of Definition 4.9 of \cite{KK}. The importance of supermartingale densities/deflators has been first recognized by \cite{KrSch} in the context of utility maximisation.

We now show that the NSA condition ensures the existence of a tradable weak martingale deflator. This can already be guessed by carefully examining the proof of Theorem \ref{NSA}, but, since the result is of interest, we prefer to give full details.

\begin{Prop}	\label{weakdfl}
Let $\tau:=\inf\bigl\{t\in[0,T]:\hatK_t=\infty\bigr\}$. If the NSA condition holds then the process $\hatZ:=\mathcal{E}(-\lambda\cdot M)\ind_{\lsi0,\tau\rsir}$ is a tradable weak martingale deflator. Furthermore, $\hatZ N\in\mMloc$ for any $N=(N_t)_{0\leq t \leq T}\in\mMloc$ orthogonal to $M$ (in the sense of \cite{JS}, Definition I.4.11).
\end{Prop}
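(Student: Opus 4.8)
The plan is to construct $\hatZ$ by localisation and then to reduce every assertion to the integration by parts formula, mimicking the computation in the proof of Theorem \ref{NSA}. Since NSA holds, Theorem \ref{NSA} supplies the two ingredients I shall use repeatedly: $\nu_t=0$ $P\otimes B$-a.e., equivalently $dA=d\langle M,M\rangle\lambda$ (so $a=c\lambda$), and $\alpha=\infty$ $P$-a.s. I would first set $\rho_n:=\inf\{t\in[0,T]:\hatK_t\geq n\}\wedge T$. Because $\alpha=\infty$ forces the continuous increasing process $\hatK$ not to jump to infinity, one has $\rho_n\uparrow\tau$ with $\hatK_{\rho_n}\leq n$, so Novikov's condition (exactly as in Theorem \ref{NSA}, applied to $\lambda\ind_{\lsir0,\rho_n\rsi}\cdot M$) makes $\mathcal{E}(-\lambda\ind_{\lsir0,\rho_n\rsi}\cdot M)$ a strictly positive process in $\M^c$ agreeing with $\hatZ$ on $\lsi0,\rho_n\rsi$; thus $\hatZ^{\rho_n}\in\M^c$ for every $n$.

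To upgrade this to the local martingale property of $\hatZ$ on the whole of $[0,T]$ is what I expect to be the main obstacle, since on $\{\tau\leq T\}$ (which may have positive probability) the sequence $\rho_n$ only increases to $\tau<T$ and so does not reduce $\hatZ$ over the full horizon. I would resolve this by first showing $\hatZ_{\tau-}=0$ on $\{\tau\leq T\}$: writing $\hatZ=\exp(-(\lambda\cdot M)-\tfrac12\hatK)$ and time-changing $\lambda\cdot M$ by Dambis--Dubins--Schwarz, the divergence $\hatK_{\tau-}=\infty$ forces $\mathcal{E}(-\lambda\cdot M)_t\to0$. Consequently $\hatZ$ is continuous, non-negative, bounded on $[0,T]$, with $\hatZ_0=1$ and $\hatZ\equiv0$ on $\lsi\tau,T\rsi$. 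Boundedness makes $\sigma_n:=\inf\{t:\hatZ_t\geq n\}\wedge T$ increase to $T$, and a double-localisation argument (stopping the martingales $\hatZ^{\rho_m}$ also at $\sigma_n$, letting $m\to\infty$, and using that $\hatZ$ is absorbed at $0$) shows that each bounded process $\hatZ^{\sigma_n}$ is a true martingale. Hence $\hatZ$ is a non-negative local martingale with $\hatZ_0=1$, and its absorption time in Definition \ref{weak-defl} coincides with $\tau$.

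Next I would verify $\hatZ S^i\in\Mloc$ and the final claim by the same product rule. On $\lsi0,\tau\rsir$ one has $d\hatZ=-\hatZ_-\lambda^{\top}dM$ and $d\langle\hatZ,M^i\rangle=-\hatZ_-(c\lambda)^i\,dB=-\hatZ_-a^i\,dB$, where $a=c\lambda$ is used; hence $d(\hatZ S^i)=\hatZ_-\,dM^i+S^i_-\,d\hatZ+\hatZ_-(dA^i-a^i\,dB)=\hatZ_-\,dM^i+S^i_-\,d\hatZ$, a sum of stochastic integrals against local martingales, and since $\hatZ\equiv0$ after $\tau$ this gives $\hatZ S^i\in\Mloc$. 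For the last statement, if $N\in\Mloc$ is orthogonal to $M$ then $d\langle\hatZ,N\rangle=-\hatZ_-\lambda^{\top}d\langle M,N\rangle=0$, so the same computation yields $d(\hatZ N)=\hatZ_-\,dN+N_-\,d\hatZ$, whence $\hatZ N\in\Mloc$.

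Finally, for tradability I would compute the dynamics of $1/\hatZ$. Using $a=c\lambda$ to combine the finite-variation and quadratic-variation terms, $1/\hatZ=\exp((\lambda\cdot M)+\tfrac12\hatK)$ satisfies $d(1/\hatZ)=(1/\hatZ_-)\lambda^{\top}dS$, i.e. $1/\hatZ=\mathcal{E}(\lambda\cdot S)=1+\bigl((\lambda/\hatZ_-)\cdot S\bigr)$. Setting $\tau_n:=\rho_n\uparrow\tau$ and $\theta^n:=(\lambda/\hatZ_-)\ind_{\lsir0,\tau_n\rsi}$, the process $1/\hatZ_-$ is bounded on $\lsi0,\tau_n\rsi$ and $\hatK_{\tau_n}\leq n$, so $\theta^n\in L(S)$; moreover $\theta^n\cdot S=1/\hatZ^{\tau_n}-1\geq-1$ shows $\theta^n\in\A_1$, and $1/\hatZ^{\tau_n}=1+(\theta^n\cdot S)=V(1,\theta^n)$. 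Since $\tau_n\uparrow\tau$, this is exactly the condition in Definition \ref{weak-defl}, so $\hatZ$ is a tradable weak martingale deflator.
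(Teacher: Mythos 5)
Your proof is correct and follows the same architecture as the paper's: localise along the stopping times where $\hatK$ first reaches level $n$, use Novikov to get $\hatZ^{\tau_n}\in\M^c$, then integration by parts for $\hatZ S^i$ and $\hatZ N$, and It\^o's formula for $1/\hatZ$ to exhibit the tradability. The one place where you genuinely diverge is the extension of $\hatZ$ to a local martingale on all of $[0,T]$: the paper asserts $\hatZ_{\tau-}=0$ on $\{\tau\leq T\}$ and then invokes Proposition 3.1 of Maisonneuve's work on continuous local martingales defined on a stochastic interval, whereas you prove $\hatZ_{\tau-}=0$ by hand via Dambis--Dubins--Schwarz (the drift $\tfrac12\hatK$ dominates the time-changed Brownian motion as $\hatK\to\infty$) and then run a double localisation at $\sigma_n=\inf\{t:\hatZ_t\geq n\}$, passing to the limit in the bounded martingales $\hatZ^{\rho_m\wedge\sigma_n}$ to conclude that each $\hatZ^{\sigma_n}$ is a true martingale. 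This is a legitimate, self-contained replacement for the citation, and is arguably the more instructive route. Two small wording points: ``bounded on $[0,T]$'' and ``$1/\hatZ_-$ is bounded on $\lsi0,\tau_n\rsi$'' should read \emph{pathwise} bounded (continuity on a compact interval), which is all your argument needs but is not uniform boundedness; and your $\rho_n=\inf\{t:\hatK_t\geq n\}\wedge T$ equals $T$ rather than $\infty$ on $\{\hatK_T<\infty\}$, so to match Definition \ref{weak-defl} literally one should drop the ``$\wedge T$'' as the paper does --- immaterial on a finite horizon, but worth tidying.
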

\begin{proof}
Note first that, due to Theorem \ref{NSA}, we have $\tau>0$ $P$-a.s. Furthermore, the sequence $\{\tau_n\}_{n\in\N}$, defined as $\tau_n:=\inf\{t\in[0,T]:\hatK_t\geq n\}$, $n\in\N$, is an announcing sequence for $\tau$, in the sense of I.2.16 of \cite{JS}, and we have $\lsi0,\tau\rsir=\bigcup_{n\in\N}\lsi0,\tau_n\rsi$. Since $\hatK_{T\wedge\tau_n}\leq n$ $P$-a.s. for every $n\in\N$, the process $\hatZ:=\mathcal{E}(-\lambda\cdot M)$ is well-defined as a continuous local martingale on $\lsi0,\tau\rsir$, in the sense of Section 5.1 of \cite{Jacod}. On $\{\tau\leq T\}$ , we have $\hatK_{\tau}=\infty$ and $\hatZ_{\tau-}=0$ $P$-a.s. 
By letting $\hatZ=\hatZ_{\tau-}=0$ on $\lsi\tau,T\rsi$, $\hatZ$ can be extended to a continuous local martingale on the whole interval $[0,T]$.  Furthermore, the integration by parts formula gives that, for every $i=1,\ldots,d$,
$$	\ba
d(\hatZ S^i)_t
&= \hatZ_t\,dS^i_t+S^i_t\,d\hatZ_t+d\langle\hatZ,S^i\rangle_{\!t}
= \hatZ_t\,dM^i_t+\hatZ_t\,d\langle M^i,M\rangle_t\lambda_t+S^i_t\,d\hatZ_t-\hatZ_t\lambda_t^{\top}d\langle M,M^i\rangle_{\!t}	\\
&= \hatZ_t\,dM^i_t+S^i_t\,d\hatZ_t.
\ea	$$
Since $S^i$ and $\hatZ$ are continuous, this implies that $\hatZ S^i\in\Mloc^c$, for every $i=1,\ldots,d$. We have thus shown that $\hatZ=\mathcal{E}(-\lambda\cdot M)\ind_{\lsi0,\tau\rsir}\in\Dweak$. 
To prove the tradability of $\hatZ$, note that the process $1/\hatZ$ is well defined on $\lsi0,\tau\rsir=\bigcup_{n\in\N}\lsi0,\tau_n\rsi$. It\^o's formula gives then the following, for every $n\in\N$:
\be	\label{tradability}
d\frac{1}{\hatZ^{\tau_n}_t}
= -\frac{1}{\bigl(\hatZ^{\tau_n}_t\bigr)^2}\,d\hatZ^{\tau_n}_t + \frac{1}{\bigl(\hatZ^{\tau_n}_t\bigr)^3}\,
d\langle\hatZ\rangle^{\tau_n}_{\!t}
= \frac{1}{\hatZ^{\tau_n}_t}\lambda_t\,dM^{\tau_n}_t+\frac{1}{\hatZ^{\tau_n}_t}\lambda_t^{\top}d\langle M,M\rangle_{\!t}^{\tau_n}\lambda_t
= \theta_t^ndS_t,
\ee
where $\theta^n\!:=\!\ind_{\lsir0,\tau_n\rsi}\lambda\hatZ^{-1}\!\in\!\A_1$, for all $n\!\in\!\N$. 
Finally, for any $N\!=\!(N_t)_{0\leq t \leq T}\!\in\!\Mloc$ orthogonal to $M$:
$$
\hatZ N = N_0+\hatZ\cdot N + N_-\cdot\hatZ + \langle\hatZ,N\rangle
= N_0+\hatZ\cdot N + N_-\cdot\hatZ - \hatZ\lambda\cdot\langle M,N\rangle
= N_0+\hatZ\cdot N + N_-\cdot\hatZ,
$$
where we have used the continuity of $\hatZ$ and the orthogonality of $M$ and $N$. Since $\hatZ$ and $N_-$ are predictable and locally bounded, being adapted and left-continuous, and since $N,\hatZ\in\Mloc$, Theorem IV.29 of \cite{Pr} implies that $\hatZ N\in\Mloc$.
\end{proof}

\begin{Rem}[\emph{On the minimal martingale measure}]	\label{weak-MMM}
The process $\hatZ$ is the candidate density process of the \emph{minimal martingale measure}, originally introduced in \cite{FS} and defined as a probability measure $\hatQ\sim P$ on $(\Omega,\F)$ with $\hatQ=P$ on $\F_0$ such that $S$ is a local $\hatQ$-martingale and every local $P$-martingale orthogonal to the martingale part $M$ in the canonical decomposition of $S$ (with respect to $P$) remains a local $\hatQ$-martingale. However, even if NSA holds, the process $\hatZ$ can fail to be a well-defined density process for two reasons. First, if $P(\hatZ_T>0)<1$, the measure $\hatQ$ defined by $d\hatQ:=\hatZ_T\,dP$ fails to be equivalent to $P$, being only absolutely continuous. Second, $\hatZ$ may fail to be a true martingale, being instead a \emph{strict} local martingale in the sense of \cite{ELY}, i.e., a local martingale which is not a true martingale, so that $E[\hatZ_T]<E[\hatZ_0]=1$. In the latter case, $\hatQ$ fails to be a probability measure, since $\hatQ(\Omega)=E[\hatZ_T]<1$.
\end{Rem}

\begin{Rem}
A strong arbitrage opportunity corresponds to the notion of arbitrage adopted in the context of the Benchmark Approach, see e.g. Section 7 of \cite{Pl3} and Section 10.3 of \cite{PH}. However, we want to make the reader aware of the fact that typical applications of the Benchmark Approach require assumptions stronger than NSA, namely the existence of the \emph{Growth Optimal Portfolio (GOP)}. Theorem 4.12 of \cite{KK} shows that the existence of a (non-exploding) GOP is equivalent to the \emph{No Unbounded Profit with Bounded Risk (NUPBR)} condition, which is strictly stronger than NSA (see e.g. Example \ref{example-NA1}). Hence, in the context of the Benchmark Approach, not only strong arbitrage opportunities but also slightly weaker forms of arbitrage must be ruled out from the market (see also Section \ref{S7} for a related discussion). The NSA condition has been also adopted in \cite{CL} as a necessary (but not sufficient) requirement in order to construct the GOP. 
\end{Rem}

\section{No Arbitrage of the First Kind}	\label{S5}

An arbitrage of the first kind amounts to a non-negative and non-zero payoff which can be super-replicated via a non-negative portfolio by every market participant, regardless of his/her initial wealth. It is evident that a strong arbitrage opportunity yields an arbitrage of the first kind. Indeed, let $H\in\A_0$ generate a strong arbitrage opportunity and define $\xi:=G_T(H)$. By Definition \ref{arb}-\emph{(ii)}, it holds that $P(\xi\geq 0)=1$ and $P(\xi>0)>0$. Moreover, for any $v\in(0,\infty)$, we also have $V_T(v,H)=v+G_T(H)>\xi$, thus showing that $\xi$ generates an arbitrage of the first kind. A model satisfying NSA but allowing for arbitrages of the first kind will be presented in Example \ref{example-NA1}, thus showing that NA1 is strictly stronger than NSA.

We now introduce two alternative notions of arbitrage which will be shown to be equivalent to an arbitrage of the first kind (Lemma \ref{A1-UPBR-CT}).

\begin{Def}	\label{UPBR} \mbox{}
\begin{itemize}
\item[(i)]
A sequence $\{H^n\}_{n\in\N}\subset\A_1$ generates an \emph{unbounded profit with bounded risk} if the collection $\{G_T(H^n)\}_{n\in\N}$ is unbounded in probability, i.e., if  $\lim_{m\rightarrow\infty}\sup_{n\in\N}P(G_T(H^n)>m)>0$. If there exists no such sequence we say that the \emph{No Unbounded Profit with Bounded Risk (NUPBR)} condition holds;
\item[(ii)]
let $\{x_n\}_{n\in\N}\subset\R_+$ be a sequence such that $x_n\searrow0$ as $n\rightarrow\infty$. A sequence  $\{H^n\}_{n\in\N}\subset\A$ with $H^n\in\A_{x_n}$, for all $n\in\N$, generates a \emph{cheap thrill} if $V_T(x_n,H^n)\rightarrow\infty$ $P$-a.s. as $n\rightarrow\infty$ on some event with strictly positive probability. If there exists no such sequence we say that the \emph{No Cheap Thrill (NCT)} condition holds.
\end{itemize}
\end{Def}

The NUPBR condition has been first introduced under that name in \cite{KK} and corresponds to condition BK in \cite{Kab} (the same condition also plays a key role in the seminal paper \cite{DS94}). Note that there is no loss of generality in considering $1$-admissible strategies in Definition \ref{UPBR}-\emph{(i)}. Indeed, we have $\{G_T(H):H\in\A_a\}=a\{G_T(H):H\in\A_1\}$, for any $a>0$, and, hence, the set of all final wealths generated by $a$-admissible strategies is bounded in probability if and only if the set of all final wealths generated by $1$-admissible strategies is bounded in probability. 
The notion of cheap thrill has been introduced by \cite{LW} in the context of a complete It\^o process model and can easily be shown to be equivalent to the notion of \emph{asymptotic arbitrage of the first kind} (with respect to the fixed probability measure $P$) of \cite{KabKra}, hence the name ``arbitrage of the first kind'' in Definition \ref{arb}-\emph{(iii)}\footnote{We want to point out that a cheap thrill is also equivalent to an \emph{approximate arbitrage} in the sense of \cite{CL}, as the reader can easily verify. However, we shall use the term ``approximate arbitrage'' with a different meaning in Section \ref{S6}.}.

The next lemma proves the equivalence between the notions introduced in Definition \ref{UPBR} and the notion of arbitrage of the first kind. The proof relies on techniques similar to those used in Section 3 of \cite{DS94} or in Proposition 1 of \cite{Ka1} and does not rely on the continuity of $S$.

\begin{Lem}	\label{A1-UPBR-CT}
The NA1, NUPBR and NCT conditions are all equivalent.
\end{Lem}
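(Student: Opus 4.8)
The plan is to prove the three conditions equivalent by establishing a cycle of implications, exploiting the fact that, since $\A_1$ is convex and $H\mapsto G_T(H)$ is linear, the set $\mathcal{V}:=\{V_T(1,H):H\in\A_1\}\subseteq L^0_+$ is convex, and NUPBR holds if and only if $\mathcal{V}$ is bounded in probability. First I would dispose of the two easy implications NUPBR $\Rightarrow$ NA1 and NUPBR $\Rightarrow$ NCT by contraposition, using a pure scaling argument that does not appeal to convexity. If $\xi$ generates an arbitrage of the first kind, choosing $v=1/n$ gives $H^n\in\A_{1/n}$ with $1/n+G_T(H^n)\geq\xi$; then $nH^n\in\A_1$ and $G_T(nH^n)\geq n\xi-1\to\infty$ on $\{\xi>0\}$, so $\{G_T(nH^n)\}$ is unbounded in probability, contradicting NUPBR. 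Similarly, if $\{H^n\}$ with $H^n\in\A_{x_n}$ and $x_n\searrow0$ is a cheap thrill, then $\hat{H}^n:=H^n/x_n\in\A_1$ satisfies $V_T(1,\hat{H}^n)=V_T(x_n,H^n)/x_n\to\infty$ on a set of positive probability (since $x_n\to0$), again violating NUPBR.

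The substantial work is the converse, which I would obtain by showing that the failure of NUPBR implies the failure of both NA1 and NCT. Assume $\mathcal{V}$ is unbounded in probability. Since $\mathcal{V}$ is convex and contained in $L^0_+$, the convex-combinations lemma (Lemma A1.1 of \cite{DS94}), applied together with an exhaustion argument that crucially uses the convexity of $\A_1$, furnishes a sequence $\tilde{H}^n\in\A_1$ such that $W^n:=V_T(1,\tilde{H}^n)$ converges $P$-a.s.\ to a $[0,\infty]$-valued random variable $g$ with $P(A)>0$, where $A:=\{g=\infty\}$; the point is that convex combinations of elements of $\mathcal{V}$ stay in $\mathcal{V}$, so each $W^n$ is again of the form $V_T(1,\tilde{H}^n)$. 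By Egorov's theorem applied to $\arctan W^n$, there is a subset $A'\subseteq A$ with $P(A')>0$ on which $W^n\to\infty$ uniformly; passing to a subsequence I may assume $W^{n_k}\geq 2^k$ on $A'$ for every $k$.

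From this uniform blow-up both a cheap thrill and an arbitrage of the first kind can be read off. Setting $\breve{H}^k:=\tilde{H}^{n_k}/k\in\A_{1/k}$ and $x_k:=1/k\searrow0$, we get $V_T(1/k,\breve{H}^k)=W^{n_k}/k\geq 2^k/k\to\infty$ on $A'$, i.e.\ a cheap thrill, so NCT fails. For NA1, I would take $\xi:=\ind_{A'}$, which satisfies $P(\xi>0)>0$: given $v\in(0,\infty)$, choose $k$ with $2^k\geq 1/v$ and set $H^v:=v\,\tilde{H}^{n_k}\in\A_v$; then $V_T(v,H^v)=v\,W^{n_k}\geq0$ everywhere and $\geq v\,2^k\geq1$ on $A'$, whence $V_T(v,H^v)\geq\xi$ $P$-a.s. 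Thus $\xi$ is an arbitrage of the first kind and NA1 fails. Combined with the easy directions, this yields NUPBR $\Leftrightarrow$ NA1 $\Leftrightarrow$ NCT.

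I expect the main obstacle to be exactly the step producing $P$-a.s.\ divergence to $+\infty$ on a set of positive probability out of mere unboundedness in probability: this is false for arbitrary sequences and genuinely uses the convexity of $\A_1$, in contrast to the easy directions where naive scaling suffices. The secondary technical point is converting the pointwise divergence of $W^n$ into the uniform domination $W^{n_k}\geq 2^k$ on a fixed set $A'$, which is needed because the definitions of both NA1 and NCT require an explicit dominating strategy for each capital level $v$ (resp.\ each $x_k$) rather than merely a limiting one; Egorov's theorem together with a rate-extracting subsequence handles this cleanly.
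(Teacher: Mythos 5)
Your argument is correct, but it is organised differently from the paper's. The paper proves a three-step cycle on the arbitrage objects themselves: an arbitrage of the first kind yields an unbounded profit with bounded risk (your scaling step), an unbounded profit with bounded risk yields a cheap thrill via Lemma A1.1 of \cite{DS94} followed by the $\log(n)$-rescaling trick (this leg only needs a limit $g$ with $P(g>0)>0$, not divergence to $+\infty$), and a cheap thrill yields an arbitrage of the first kind via the hereditary-unboundedness machinery of \cite{BS}. You instead use NUPBR as a hub: two easy scaling implications outward, and one hard step inward showing that failure of NUPBR produces a sequence in $\A_1$ whose terminal values diverge to $+\infty$ $P$-a.s.\ on a set of positive probability, from which you read off both a cheap thrill and an arbitrage of the first kind directly. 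This buys you a cleaner structure (the hard analysis is done exactly once) at the price of having to prove the strongest form of the blow-up statement up front. One point of precision: Lemma A1.1 of \cite{DS94} by itself does not deliver $P(g=\infty)>0$ from unboundedness in probability --- applied to a badly chosen starting sequence it can return a finite limit even when the convex set is unbounded; what you actually need is the decomposition of \cite{BS} (Lemma 2.3), which for a convex subset of $L^0_+$ unbounded in probability produces a set of positive measure on which the set is \emph{hereditarily} unbounded, whence a sequence $f_n$ with $P(\Omega_u\cap\{f_n<n\})<2^{-n}$ and Borel--Cantelli give the a.s.\ divergence (your Egorov step then extracts the uniform rate). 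You correctly flag this as the crux and correctly attribute it to convexity, but the citation should be to \cite{BS} rather than to Lemma A1.1 alone; note this is the very same tool the paper invokes in its third leg, so the two proofs ultimately rest on the same ingredients. Your verifications of the admissibility bookkeeping ($\tilde H/k\in\A_{1/k}$, $v\tilde H^{n_k}\in\A_v$, $V_T(v,H^v)\geq\ind_{A'}$) are all sound.
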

\begin{proof}
Let the random variable $\xi$ generate an arbitrage of the first kind. By definition, for every $n\in\N$, there exists a strategy $H^n\in\A_{1/n}$ such that $V_T(1/n,H^n)\geq\xi$ $P$-a.s. For every $n\in\N$, define $\tildeH^n:=nH^n$, so that $\{\tildeH^n\}_{n\in\N}\subset\A_1$ and $G_T(\tildeH^n)=nG_T(H^n)\geq n\xi-1$ $P$-a.s. Since $P(\xi>0)>0$, this implies that the collection $\{G_T(\tildeH^n):n\in\N\}$ is unbounded in probability.	\\
Let $\{H^n\}_{n\in\N}\subset\A_1$ generate an unbounded profit with bounded risk, so that $P\bigl(G_T(H^n)\geq n\bigr)>\beta$ for all $n\in\N$ and for some $\beta>0$. Let $\tildeH^n:=n^{-1}H^n$, for every $n\in\N$, so that $\tildeH^n\in\A_{1/n}$ and $P\bigl(G_T(\tildeH^n)\geq 1\bigr)>\beta$. Let $f_n:=n^{-1}+G_T(\tildeH^n)\geq 0$ $P$-a.s., for all $n\in\N$. Due to Lemma A1.1 of \cite{DS94}, there exists a sequence $\{g_n\}_{n\in\N}$, with $g_n\in\conv\{f_n,f_{n+1},\ldots\}$, such that $\{g_n\}_{n\in\N}$ converges $P$-a.s. to a non-negative random variable $g$ as $n\rightarrow\infty$. For all $n\in\N$, let $K^n$ be the convex combination of strategies $\{\tildeH^m\}_{m\geq n}$ corresponding to $g_n$. It is easy to check that $K^n\in\A_{1/n}$, for every $n\in\N$. Furthermore, we have $G_T(K^n)=g_n+O\left(n^{-1}\right)$, so that $G_T\left(K^n\right)\rightarrow g$ $P$-a.s. as $n\rightarrow\infty$. The last assertion of Lemma A1.1 of \cite{DS94} implies that $P\left(g>0\right)>0$. By letting $x_n:=\log(n)/n$ and $\tildeK^n:=\log(n)K^n$, for every $n\in\N$, so that $\tildeK^n\in\A_{x_n}$, we then obtain a sequence $\{\tildeK^n\}_{n\in\N}$ which generates a cheap thrill.	\\
Finally, let the sequence $\{H^n\}_{n\in\N}$ generate a cheap thrill, with respect to $\{x_n\}_{n\in\N}$. By definition, this implies that, for each $n\in\N$, the set $C_n:=\bigl\{V_T(x_m,H^m):m\in\N,m\geq n\bigr\}$ is hereditarily unbounded in probability on $\Omega_u:=\bigl\{\omega\in\Omega:\lim_{n\rightarrow\infty}V_T(x_n,H^n)(\omega)=\infty\bigr\}$, in the sense of \cite{BS}. Then $\tildeC_n:=\conv C_n$ is hereditarily unbounded in probability on $\Omega_u$ as well, for all $n\in\N$. Similarly as in the proof of Proposition 1 of \cite{Ka1}, by Lemma 2.3 of \cite{BS}, for every $n\in\N$ there exists an element $f_n\in\tildeC_n$ such that $P\bigl(\Omega_u\cap\{f_n<1\}\bigr)<P(\Omega_u)/2^{\,n+1}$. Let $A:=\bigcap_{n\in\N}\{f_n\geq 1\}$ and $\xi:=\ind_A$. Then:
$$
P\left(\Omega_u\setminus A\right)
= P\biggl(\,\bigcup_{n\in\N}\bigl(\Omega_u\cap\{f_n<1\}\bigr)\biggr)
\leq \sum_{n\in\N}P\bigl(\Omega_u\cap\{f_n<1\}\bigr)
< \sum_{n\in\N}\frac{P(\Omega_u)}{2^{\,n+1}} = \frac{P(\Omega_u)}{2},
$$
which implies $P(A)>0$, thus showing that $P(\xi\geq 0)=1$ and $P(\xi>0)>0$. Note also that $\xi\leq\ind_Af_n\leq f_n$ $P$-a.s., for every $n\in\N$. Since $f_n\in\conv\bigl\{V_T(x_m,H^m):m\in\N,m\geq n\bigr\}$, for every $n\in\N$, and $x_n\searrow0$ as $n\rightarrow\infty$, this implies that $\xi$ generates an arbitrage of the first kind.
\end{proof}

\begin{Rem}	\label{Herdegen}
We want to mention that the recent paper \cite{Herd} provides an alternative characterisation of NA1 in terms of the equivalent \emph{No Gratis Events (NGE)} condition. In particular, the NGE condition (and, consequently, NA1 as well) is shown to be numéraire-independent. We shall give a very simple proof of the latter property in Corollary \ref{stable}.
\end{Rem}

The following theorem gives several equivalent characterisations of the NA1 condition (another equivalent and useful characterisation will be provided in the next section, see Corollary \ref{NA1-NFLVR}).

\begin{Thm}	\label{NUPBR}
The following are equivalent, using the notation introduced in \eqref{char}-\eqref{char-2} and \eqref{MVT}:
\begin{itemize}
\item[(i)] any (and, consequently, all) of the NA1, NUPBR and NCT conditions holds;
\item[(ii)] $\nu_t=0$ $P\otimes B$-a.e. and $\hatK_T=\int_0^T\!a^{\top}_tc^+_ta_t\,dB_t<\infty$ $P$-a.s., i.e., $\lambda\in L^2_{\text{\upshape{loc}}}(M)$;
\item[(iii)] there exists a tradable martingale deflator;
\item[(iv)] $\D\neq\emptyset$, i.e., there exists a martingale deflator.
\end{itemize}
\end{Thm}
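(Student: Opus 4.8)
The plan is to close the cycle of implications $(ii)\Rightarrow(iii)\Rightarrow(iv)\Rightarrow(i)\Rightarrow(ii)$, relying throughout on the candidate deflator $\hatZ=\mathcal{E}(-\lambda\cdot M)$ constructed in Proposition \ref{weakdfl} and on the supermartingale property established in Lemma \ref{sigma}. For $(ii)\Rightarrow(iii)$, I would first observe that $\hatK_T<\infty$ $P$-a.s.\ forces $\alpha=\infty$ $P$-a.s.\ (since then $\hatK^{t+h}_t\leq\hatK_T<\infty$ for all admissible $t,h$), so that $(ii)$ entails NSA via Theorem \ref{NSA} and Proposition \ref{weakdfl} applies. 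Moreover $\tau:=\inf\{t\in[0,T]:\hatK_t=\infty\}>T$ $P$-a.s., whence $\hatZ=\mathcal{E}(-\lambda\cdot M)$ on all of $[0,T]$ and $\hatZ_T=\exp\bigl(-(\lambda\cdot M)_T-\tfrac12\hatK_T\bigr)>0$ $P$-a.s. Thus $\hatZ$ is not merely a weak deflator but a genuine (tradable) martingale deflator, giving $(iii)$. The implication $(iii)\Rightarrow(iv)$ is immediate from the definitions.

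For $(iv)\Rightarrow(i)$, let $Z\in\D$ and take any $H\in\A_1$. Lemma \ref{sigma} (applied to $H\in\A$) shows that $Z\bigl(1+H\cdot S\bigr)$ is a non-negative local martingale, hence a supermartingale, so that $E\bigl[Z_T(1+G_T(H))\bigr]\leq Z_0=1$. I would then argue by contradiction: if NUPBR failed, there would exist $\{H^n\}_{n\in\N}\subset\A_1$ with $\{G_T(H^n)\}_{n\in\N}$ unbounded in probability, yet the non-negative random variables $f_n:=1+G_T(H^n)$ would satisfy $E[Z_Tf_n]\leq1$. Because $Z_T>0$ $P$-a.s., a routine Chebyshev-type estimate (choosing $\delta$ with $P(Z_T<\delta)$ small and bounding $P(f_n>m)$ via $P(Z_Tf_n>\delta m)\leq1/(\delta m)$) forces $\{f_n\}_{n\in\N}$ to be bounded in probability, a contradiction. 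Hence NUPBR holds, and Lemma \ref{A1-UPBR-CT} delivers the remaining conditions in $(i)$.

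The main obstacle is $(i)\Rightarrow(ii)$, and specifically the integrability $\hatK_T<\infty$. Since a strong arbitrage opportunity is an arbitrage of the first kind, NA1 implies NSA, so Theorem \ref{NSA} already yields $\nu_t=0$ $P\otimes B$-a.e.\ and $\alpha=\infty$ $P$-a.s. To obtain $\hatK_T<\infty$, I would suppose for contradiction that $P(B)>0$ for $B:=\{\hatK_T=\infty\}$. On $B$ the blow-up time $\tau$ of $\hatK$ satisfies $\tau\leq T$ and, as noted in the proof of Proposition \ref{weakdfl}, $\hatZ_{\tau-}=0$. Using the tradable strategies $\theta^n:=\ind_{\lsir0,\tau_n\rsi}\lambda\hatZ^{-1}\in\A_1$ from \eqref{tradability}, with $\tau_n:=\inf\{t:\hatK_t\geq n\}$, one has $V(1,\theta^n)=1/\hatZ^{\tau_n}$, so that on $B$ the terminal values $V_T(1,\theta^n)=1/\hatZ_{\tau_n}\to1/\hatZ_{\tau-}=\infty$ as $n\rightarrow\infty$ (using $\tau_n\nearrow\tau$ and continuity of $\hatZ$ on $\lsi0,\tau\rsir$). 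Consequently $\{G_T(\theta^n)\}_{n\in\N}$ is unbounded in probability on the positive-probability set $B$, contradicting NUPBR. This closes the cycle; the delicate point throughout is the interplay between $\alpha=\infty$ (no explosion of $\hatK$ over any subinterval) and the finiteness of $\hatK_T$, which is exactly what separates NSA from NA1.
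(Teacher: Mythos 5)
Your proposal is correct and follows essentially the same route as the paper: $(i)\Rightarrow(ii)$ via NSA plus the exploding wealths $V_T(1,\theta^n)=1/\hatZ_{T\wedge\tau_n}$ on $\{\hatK_T=\infty\}$, $(ii)\Rightarrow(iii)$ via Proposition \ref{weakdfl}, and $(iv)\Rightarrow(i)$ via the supermartingale property of $Z(1+H\cdot S)$ from Lemma \ref{sigma}. The only (cosmetic) difference is in $(iv)\Rightarrow(i)$, where you target NUPBR through a Chebyshev estimate on $E[Z_Tf_n]\leq 1$ rather than contradicting NA1 directly via $E[Z_T\xi]\leq v$ for all $v>0$; both rest on the same inequality and on Lemma \ref{A1-UPBR-CT}.
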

\begin{proof}
$(i)\Rightarrow(ii)$: 
due to Lemma \ref{A1-UPBR-CT}, the NA1, NUPBR and NCT conditions are equivalent. So, let us assume that NUPBR holds. Since NUPBR implies NSA, Theorem \ref{NSA} gives that $\nu_t=0$ $P\otimes B$-a.e. and $\alpha=\inf\bigl\{t\in[0,T]:\hatK_t^{t+h}=\infty,\forall h\in(0,T-t]\bigr\}=\infty$ $P$-a.s. It remains to show that $\hatK_T<\infty$ $P$-a.s. Suppose on the contrary that $P(\tau\leq T)>0$, where $\tau:=\inf\bigl\{t\in[0,T]:\hatK_t=\infty\bigr\}$, so that $P(\hatK_T=\infty)=P(\hatZ_T=0)>0$, where the process $\hatZ$ is defined as in Proposition \ref{weakdfl}. 
Define the sequence $\{\tau_n\}_{n\in\N}$ of stopping times $\tau_n:=\inf\bigl\{t\in[0,T]:\hatK_t\geq n\bigr\}$, for $n\in\N$. Clearly, we have $\tau_n\nearrow\tau$ $P$-a.s. as $n\rightarrow\infty$. 
As shown in equation \eqref{tradability}, we have $\theta^n=\ind_{\lsir0,\tau_n\rsi}\lambda\hatZ^{-1}\in\A_1$ and $G_T(\theta^n)=\hatZ_{T\wedge\tau_n}^{-1}-1$, for every $n\in\N$, so that $G_T(\theta^n)\rightarrow\hatZ_{T\wedge\tau}^{-1}-1$ $P$-a.s. as $n\rightarrow\infty$.
Since $\hatZ_{T\wedge\tau}=0$ on $\{\tau\leq T\}$ and $P(\tau\leq T)>0$, this shows that $\bigl\{G_T(H^n):n\in\N\bigr\}$ cannot be bounded in probability, thus contradicting the assumption that NUPBR holds.	\\
$(ii)\Rightarrow(iii)$:
this follows directly from Proposition \ref{weakdfl}, since $\hatK_T<\infty$ $P$-a.s. implies $\tau=\infty$ $P$-a.s.	\\
$(iii)\Rightarrow(iv)$: by Definition \ref{weak-defl}, this implication is trivial.	\\
$(iv)\Rightarrow(i)$:
let $Z\in\D$ and suppose that the random variable $\xi$ generates an arbitrage of the first kind, so that for every $v\in(0,\infty)$ there exists an element $H^v\in\A_v$ satisfying $V_T(v,H^v)\geq\xi$ $P$-a.s. Due to Lemma \ref{sigma}, the product $Z\,V(v,H^v)=Z\,(v+H^v\cdot S)$ is a non-negative local martingale and, hence, also a supermartingale. As a consequence, for every $v\in(0,\infty)$,
$$
E\bigl[Z_T\,\xi\bigr] \leq E\bigl[Z_TV_T(v,H^v)\bigr] \leq E\bigl[Z_0V_0(v,H^v)\bigr] = v.
$$
Since $Z_T>0$ $P$-a.s., this contradicts the assumption that $P(\xi>0)>0$. Due to Lemma \ref{A1-UPBR-CT}, the NUPBR and NCT conditions hold as well.
\end{proof}

Results analogous to Theorem \ref{NUPBR} have already been obtained in Section 4 of \cite{Ka1}, in Section 3 of \cite{HS} and also earlier in Theorem 2.9 of \cite{CS}. However, the proof given here is rather short and emphasises the role of the tradability of the martingale deflator $\hatZ$ introduced in Proposition \ref{weakdfl}. In particular, it shows that the event $\{\hatK_T=\infty\}$ corresponds to the explosion of the final wealth generated by a sequence of $1$-admissible strategies (see also Section 6 of \cite{CL} for a related discussion).

\begin{Rem}[\emph{The numéraire portfolio}]	\label{numeraire}
The NA1 condition can be shown to be equivalent to the existence of the \emph{numéraire portfolio}, defined as the strictly positive portfolio process $V^*:=V(1,\theta^*)$, $\theta^*\in\A_1$, such that $V(1,\theta)/V^*$ is a supermartingale for all $\theta\in\A_1$ (see e.g. \cite{Be}). In the setting of the present paper, it is easy to verify that, as long as NA1 holds, the numéraire portfolio coincides with the inverse of the tradable martingale deflator $\hatZ$, as follows from Theorem \ref{NUPBR} together with Lemma \ref{sigma} and Fatou's lemma (compare also with \cite{HS}, Lemma 5). The equivalence between NUPBR and the existence of the numéraire portfolio is proved in full generality in \cite{KK} (see also \cite{CL} for related results).
\end{Rem}

An important property of the NA1 condition (as well as of NUPBR and NCT), which is not  shared in general by stronger no-arbitrage conditions (see e.g. \cite{DS95c}), is its invariance with respect to a \emph{change of numéraire}, as shown in the next corollary.

\begin{Cor}	\label{stable}
Let $V:=V(1,\theta)$ be a $P$-a.s. strictly positive portfolio process, for some $\theta\in\A_1$. The NA1 condition holds (for $S$) if and only if the NA1 condition holds for $(S/V,1/V)$.
\end{Cor}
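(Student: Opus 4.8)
The plan is to reduce the statement to the existence of a martingale deflator by invoking the equivalence $(i)\Leftrightarrow(iv)$ of Theorem \ref{NUPBR}. Since $\theta\in\A_1\subseteq\A$ and $S$ is a continuous semimartingale, the portfolio value $V=1+\theta\cdot S$ is a strictly positive continuous semimartingale, so that $(S/V,1/V)$ is again a continuous $\R^{d+1}$-valued semimartingale to which Theorem \ref{NUPBR} applies verbatim. It will therefore suffice to prove that $\D\neq\emptyset$ for $S$ if and only if there exists a martingale deflator for the market $(S/V,1/V)$, and I would establish this by exhibiting the explicit bijection $Z\mapsto ZV$ between the two families of deflators.

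For the forward direction I would take $Z\in\D$ and set $Z':=ZV$. Since $Z_0=V_0=1$ we get $Z'_0=1$; moreover $Z\geq0$ and $V>0$ give $Z'\geq0$, while $Z_T>0$, $V_T>0$ $P$-a.s. give $Z'_T=Z_TV_T>0$ $P$-a.s. The crucial point is that $Z'$ is a local martingale: writing $Z'=ZV=Z+Z(\theta\cdot S)$, the summand $Z$ lies in $\Mloc$ by the definition of a deflator, while $Z(\theta\cdot S)\in\Mloc$ follows from Lemma \ref{sigma} applied with $H=\theta\in\A$ (recall $\D\subseteq\Dweak$), whence $Z'\in\Mloc$. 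Finally, the defining products of the new market are local martingales, since $Z'(S^i/V)=ZS^i\in\Mloc$ for every $i=1,\ldots,d$ and $Z'(1/V)=Z\in\Mloc$. Thus $Z'$ is a martingale deflator for $(S/V,1/V)$.

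For the converse I would run the same computation in reverse: given a martingale deflator $Z'$ for $(S/V,1/V)$, I would set $Z:=Z'(1/V)$, which is a non-negative local martingale as the product of the deflator $Z'$ with the asset $1/V$, with $Z_0=Z'_0/V_0=1$ and, since $0<V_T<\infty$ $P$-a.s., $Z_T=Z'_T/V_T>0$ $P$-a.s.; the identities $ZS^i=Z'(S^i/V)\in\Mloc$ then show that $Z\in\D$. Since $Z\mapsto ZV$ and $Z'\mapsto Z'(1/V)$ are mutually inverse, this yields the desired bijection and the corollary follows from Theorem \ref{NUPBR}. The only genuine obstacle is the verification that $ZV$ (respectively $Z'(1/V)$) is again a local martingale: because $V$ is an arbitrary admissible portfolio value rather than a traded asset price, this is not automatic and is precisely what Lemma \ref{sigma} is designed to supply; every remaining step is a purely algebraic manipulation of the deflator identities and of the change-of-num\'eraire relations $Z'(S^i/V)=ZS^i$ and $Z'(1/V)=Z$.
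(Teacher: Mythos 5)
Your proposal is correct and follows essentially the same route as the paper: reduce via Theorem \ref{NUPBR} to the existence of martingale deflators, and use the bijection $Z\mapsto ZV$, with Lemma \ref{sigma} supplying the local martingale property of $ZV=Z+Z(\theta\cdot S)$. The extra details you provide (the decomposition $ZV=Z+Z(\theta\cdot S)$ and the explicit verification of $Z'_0=1$, $Z'_T>0$) are exactly what the paper's more terse argument implicitly relies on.
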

\begin{proof}
Due to Theorem \ref{NUPBR}, it suffices to show that $\D\neq\emptyset$ if and only if there exists a martingale deflator for $(S/V,1/V)$. If $Z\in\D\neq\emptyset$, Lemma \ref{sigma} implies that $Z':=ZV$ is a strictly positive local martingale with $Z'_0=1$. Since $Z'(S/V,1/V)=Z(S,1)\in\Mloc$, this shows that $Z'$ is a martingale deflator for $(S/V,1/V)$. Conversely, if $Z'$ is a martingale deflator for $(S/V,1/V)$ then $Z:=Z'/V$ is a strictly positive local martingale with $Z_0=1$ and $ZS=Z'S/V\in\Mloc$, meaning that $Z\in\D$.
\end{proof}

The next lemma describes the general structure of all martingale deflators. The result goes back to \cite{CS} and \cite{S2} (compare also with \cite{AS0}, Theorem 5), but we give a short proof in the Appendix for the sake of completeness\footnote{Actually, in the one-dimensional case, an analogous result can already be found in \cite{YY} (see also \cite{Jacod}, Theorem 6.11).}.

\begin{Lem}	\label{strdefl}
Suppose that any (and, consequently, all) of the NA1, NUPBR and NCT conditions holds. Then every martingale deflator $Z=(Z_t)_{0\leq t \leq T}$ admits the following representation:
$$
Z = \mathcal{E}(-\lambda\cdot M+N) = \hatZ\,\mathcal{E}(N),
$$
for some $N=(N_t)_{0\leq t \leq T}\in\mMloc$ with $N_0=0$, $\langle N,M\rangle=0$ and $\Delta N>-1$ $P$-a.s. and where the process $\hatZ$ is defined as in Proposition \ref{weakdfl}.
\end{Lem}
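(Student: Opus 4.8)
The plan is to start from the multiplicative (stochastic-exponential) representation of $Z$ and then isolate its component along $M$ via a Kunita--Watanabe type decomposition, reading off the coefficient $-\lambda$ from the local-martingale requirement $ZS^i\in\Mloc$. Since $Z\in\D$ is a non-negative local martingale with $Z_0=1$ and $Z_T>0$ $P$-a.s., the minimum principle for non-negative supermartingales (as in Remark \ref{Remdefl}) gives $Z_t>0$ and $Z_{t-}>0$ for all $t\in[0,T]$ $P$-a.s. Hence its stochastic logarithm $L:=\frac{1}{Z_-}\cdot Z$ is a well-defined process in $\Mloc$ with $L_0=0$ and $\Delta L=\Delta Z/Z_->-1$, and $Z=\mathcal{E}(L)$. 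Moreover, under NA1, part \emph{(ii)} of Theorem \ref{NUPBR} yields $\lambda\in L^2_{\text{loc}}(M)$, so that $\lambda\cdot M$ is a genuine continuous local martingale and $\hatZ=\mathcal{E}(-\lambda\cdot M)$ is strictly positive (Proposition \ref{weakdfl} with $\tau=\infty$).

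Next I would set $N:=L+\lambda\cdot M\in\Mloc$, so that $N_0=0$ and, since $\lambda\cdot M$ is continuous, $\Delta N=\Delta L>-1$ $P$-a.s. The crux is to show $\langle N,M\rangle=0$, and for this I would exploit the defining property $ZS^i\in\Mloc$. Applying integration by parts to $ZS^i$ and using that $S^i$ is continuous gives $[Z,S^i]=\langle Z^c,M^i\rangle=Z_-\cdot\langle L^c,M^i\rangle$, where $L^c$ is the continuous martingale part of $L$. Collecting terms, the predictable finite-variation part of $ZS^i$ equals $Z_-\bigl(dA^i+d\langle L^c,M^i\rangle\bigr)$, and since $a=c\lambda$ (so that $dA^i=d\langle\lambda\cdot M,M^i\rangle$) the requirement that this drift vanish forces $\langle L^c+\lambda\cdot M,M^i\rangle=0$ for every $i=1,\ldots,d$. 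Because $N^c=L^c+\lambda\cdot M$ is exactly the continuous martingale part of $N$, and because the purely discontinuous part of $N$ has zero covariation with the continuous process $M$, this gives $\langle N,M\rangle=0$ as claimed.

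Finally I would assemble the two representations. From $L=-\lambda\cdot M+N$ we get $Z=\mathcal{E}(-\lambda\cdot M+N)$. To pass to the product form, Yor's formula gives $\mathcal{E}(-\lambda\cdot M)\,\mathcal{E}(N)=\mathcal{E}\bigl(-\lambda\cdot M+N+[-\lambda\cdot M,N]\bigr)$; since $\lambda\cdot M$ is continuous and orthogonal to $N$, one has $[-\lambda\cdot M,N]=-\langle\lambda\cdot M,N\rangle=0$, whence $Z=\hatZ\,\mathcal{E}(N)$. The main obstacle is the second paragraph: one must correctly identify the continuous martingale part of the possibly discontinuous $Z$ as $Z_-\cdot L^c$ and argue that the jump part of $N$ does not contaminate $\langle N,M\rangle$. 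This is precisely where the continuity of $M$ is decisive, since it guarantees that any jumps of $Z$ (arising from sources in $\FF$ orthogonal to $M$) are automatically orthogonal to $M$ and thus absorbed harmlessly into $N$.
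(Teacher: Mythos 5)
Your proof is correct and rests on the same core computation as the paper's: represent $Z=\mathcal{E}(L)$ via the stochastic logarithm, apply integration by parts to $ZS^i$, and annihilate the predictable finite-variation part using $Z_->0$ and $dA=d\langle M,M\rangle\lambda$. The only structural difference is how the orthogonal part $N$ is produced. The paper invokes the Galtchouk--Kunita--Watanabe decomposition $L=\psi\cdot M+N$ with respect to $M$, so that $\langle N,M\rangle=0$ holds by construction, and then identifies $\psi\cdot M=-\lambda\cdot M$ from the vanishing drift; you instead define $N:=L+\lambda\cdot M$ outright (which requires knowing in advance, via Theorem \ref{NUPBR}\emph{(ii)}, that $\lambda\in L^2_{\text{loc}}(M)$ so that $\lambda\cdot M$ is a genuine continuous local martingale) and then derive $\langle N,M\rangle=0$ from the same drift computation, using the identification $Z^c=Z_-\cdot L^c$ and the fact that the purely discontinuous part of $N$ has zero covariation with the continuous $M$. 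Both routes are equally rigorous; yours trades the appeal to the GKW decomposition for these two explicit observations, and also obtains $\Delta N=\Delta L>-1$ directly from $Z>0$, $Z_->0$, rather than a posteriori from $\mathcal{E}(N)=Z/\hatZ>0$ as in the paper. The one step you leave tacit, and which the paper makes explicit by citing Theorem III.15 of \cite{Pr}, is that the predictable finite-variation part of the local martingale $ZS^i$ must vanish because a predictable local martingale of finite variation null at zero is identically zero; with that standard fact supplied, the argument is complete, and the concluding application of Yor's formula with $[-\lambda\cdot M,N]=0$ matches the paper's.
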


Theorem \ref{NUPBR} and Lemma \ref{strdefl} show that $\hatZ=\mathcal{E}(-\lambda\cdot M)$ can be rightfully considered as the \emph{minimal} (tradable) martingale deflator (compare with part 2 of Remark \ref{rem-min} and Remark \ref{weak-MMM}). Indeed, besides being the martingale deflator with the ``simplest'' structure, if $\hatZ$ fails to be a martingale deflator, i.e., if $P(\hatZ_T=0)>0$, then there cannot exist any other martingale deflator. 

\begin{Rem}
The equivalence $(i)\Leftrightarrow(iv)$ in Theorem \ref{NUPBR} has been recently established for general semimartingale models in the papers \cite{Kar2}, in the one-dimensional case, and \cite{T}, in the $\R^d$-valued case (see also \cite{Song} for an alternative proof).
We also want to mention that in \cite{Ka1} it is shown that $\D\neq\emptyset$ is equivalent to the existence of a finitely additive measure $Q$ on $(\Omega,\F)$, weakly equivalent to $P$ and locally countably additive, under which $S$ has a kind of local martingale behavior (see also Section 5 of \cite{Cass} for related results). 
\end{Rem}

\section{No Free Lunch with Vanishing Risk}	\label{S6}

The goal of this section consists in studying the NFLVR condition, on which the classical no-arbitrage pricing theory is based (we refer the reader to \cite{DS} for a complete account thereof), and the relations with the weak no-arbitrage conditions discussed so far. 
As can be seen from Definition \ref{arb}, the NA (and, hence, the NFLVR) condition excludes arbitrage possibilities that may require access to a finite line of credit and, hence, cannot be realized in an unlimited way be every market participant. 
Let us begin this section by introducing another (last) notion of arbitrage.

\begin{Def}	\label{AA}
A sequence $\{H^n\}_{n\in\N}\subset\A_c$, for some $c>0$, generates an \emph{Approximate Arbitrage} if $P(G_T(H^n)\geq 0)\rightarrow1$ as $n\rightarrow\infty$ and there exists a constant $\delta>0$ such that $P(G_T(H^n)>\delta)>\delta$, for all $n\in\N$. If there exists no such sequence we say that the \emph{No Approximate Arbitrage (NAA)} condition holds.
\end{Def}

The notion of approximate arbitrage has been first introduced in \cite{LS} in the context of a complete It\^o-process model and turns out to be equivalent to the notion of free lunch with vanishing risk introduced in Definition \ref{arb}-\emph{(v)}, as shown in the next lemma, the proof of which combines several techniques already employed in \cite{DS94} and \cite{LS}. Recall that $\mathcal{C}=\left(\{G_T(H):H\in\A\}-L^0_+\right)\cap L^{\infty}$, according to the notation introduced at the end of Section \ref{S2}.

\begin{Lem}	\label{approximate}
The NFLVR condition and the NAA condition are equivalent.
\end{Lem}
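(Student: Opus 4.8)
The plan is to establish the equivalence $\text{NFLVR} \Leftrightarrow \text{NAA}$ by proving the contrapositives in both directions, that is, by showing that a free lunch with vanishing risk (FLVR) exists if and only if an approximate arbitrage exists. The two notions are very close: both concern sequences of admissible strategies whose terminal gains become ``almost non-negative'' while retaining a uniformly positive chance of a non-trivial profit. The key structural difference is that FLVR as stated in Definition \ref{arb}-\emph{(v)} controls the downside through the lower bounds $-1+\delta_n$ with $\delta_n\nearrow 1$ (so the admissibility constants shrink toward $0$), whereas NAA fixes a common admissibility bound $c>0$ and instead requires $P(G_T(H^n)\geq 0)\to 1$. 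Reconciling these two ways of measuring ``vanishing risk'' is the crux of the argument.

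First I would prove that an approximate arbitrage yields a free lunch with vanishing risk. Given $\{H^n\}_{n\in\N}\subset\A_c$ with $P(G_T(H^n)\geq 0)\to 1$ and $P(G_T(H^n)>\delta)>\delta$, the natural move is to truncate the losses: for each $n$ set $\tilde{H}^n:=H^n\ind_{\lsir0,\sigma_n\rsi}$, where $\sigma_n:=\inf\{t:G_t(H^n)\leq -\varepsilon_n\}$ for a suitably chosen sequence $\varepsilon_n\searrow 0$, and then rescale. By stopping the strategy the moment its accumulated loss reaches $-\varepsilon_n$, one obtains a strategy with loss bounded below by $-\varepsilon_n$; because the event $\{\sigma_n<T\}$ is contained (up to the continuity of $G(H^n)$) in $\{G_T(H^n) \text{ dips below } -\varepsilon_n\}$, whose probability tends to $0$ as the risk vanishes, the stopped strategy agrees with $H^n$ on a set of probability tending to $1$, so the uniformly positive profit $P(G_T(\tilde H^n)>\delta)>\delta'$ is preserved for large $n$. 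Normalizing the downside to match the required bounds $-1+\delta_n$ then produces an FLVR; here I would lean on the manipulations of \cite{LS} and \cite{DS94} cited in the statement.

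For the converse, suppose $\{H^n\}_{n\in\N}\subset\A$ generates an FLVR with parameter $\varepsilon>0$ and bounds $-1+\delta_n$, so that $P(G_T(H^n)>-1+\delta_n)=1$ and $P(G_T(H^n)>\varepsilon)\geq\varepsilon$. The goal is to extract a subsequence and form convex combinations so that the negative parts $G_T(H^n)^-$, which are dominated by $1-\delta_n\to 0$, actually converge to $0$ in probability (or $P$-a.s.), thereby delivering $P(G_T(\cdot)\geq 0)\to 1$ while keeping the common admissibility bound $c=1$ and a uniform positive-profit probability. The standard tool is Lemma A1.1 of \cite{DS94}, which lets one pass from a bounded-below sequence in $L^0$ to a sequence of convex combinations converging almost surely; applying it to the $f_n:=G_T(H^n)$ (bounded below by $-1$) furnishes $g_n\in\conv\{f_n,f_{n+1},\dots\}$ converging $P$-a.s. to some $g$ with $g\geq 0$ (since the lower bounds vanish) and $P(g>\delta)>\delta$ for some $\delta>0$ coming from the preserved $\varepsilon$-profit. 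The associated convex-combination strategies $K^n$ lie in $\A_1$ and satisfy $P(G_T(K^n)\geq 0)\to 1$, which is exactly NAA with $c=1$.

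The main obstacle I anticipate is the bookkeeping in the first direction: controlling the stopping-based truncation so that the loss is genuinely bounded below by a vanishing constant \emph{and} the profit event survives, while simultaneously arranging the scalings so the resulting sequence fits the precise quantitative form of Definition \ref{arb}-\emph{(v)} (the specific $-1+\delta_n$ with $\delta_n\nearrow 1$). Because $S$ is a continuous semimartingale, $G(H^n)$ has continuous paths, so the stopping time $\sigma_n$ cleanly caps the downside at exactly $-\varepsilon_n$ without overshoot; this continuity is what makes the truncation argument transparent and is the reason the proof is more delicate for general semimartingales. Once the two vanishing-risk formulations are matched up through truncation on one side and convex combination via Lemma A1.1 on the other, the equivalence follows.
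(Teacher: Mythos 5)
Both directions of your argument contain a genuine gap, and in both cases the problem is the same: you conflate control of the \emph{terminal} gain $G_T(H^n)$ with control of quantities that the definitions do not actually bound. In the direction ``approximate arbitrage $\Rightarrow$ FLVR'', your stopping time $\sigma_n=\inf\{t:G_t(H^n)\leq-\varepsilon_n\}$ does cap the downside, but the event $\{\sigma_n\leq T\}$ is governed by the \emph{running minimum} of $G(H^n)$, which NAA does not control at all: Definition \ref{AA} only gives $P(G_T(H^n)\geq 0)\to 1$, and a $c$-admissible strategy may dip to $-c/2$ on most paths before recovering by time $T$. Then $P(\sigma_n\leq T)$ need not vanish, the stopped strategy ends at exactly $-\varepsilon_n$ on that event, and the profit probability $P\bigl(G_T(\tilde H^n)>\delta\bigr)$ can collapse. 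The paper avoids this entirely by \emph{not} constructing an FLVR sequence in the sense of Definition \ref{arb}-\emph{(v)}: it passes to the equivalent closure formulation $\overline{\mathcal{C}}\cap L^{\infty}_+=\{0\}$ of NFLVR, truncates $f_n:=G_T(H^n)\wedge\delta\in\mathcal{C}$ (bounded, since the $H^n$ share the constant $c$), applies Lemma A1.1 of \cite{DS94} and Egorov's theorem, and exhibits a non-zero element of $\overline{\mathcal{C}}\cap L^{\infty}_+$.

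In the converse direction your convex-combination step fails for the symmetric reason: $P$-a.s. convergence of $g_n\in\conv\{f_n,f_{n+1},\dots\}$ to a non-negative limit $g$ with $P(g>0)>0$ does \emph{not} give $P(g_n\geq 0)\to 1$ (take $g_n=g-(1-\delta_n)$), and NAA requires exactly that; moreover the $H^n$ of Definition \ref{arb}-\emph{(v)} lie in $\A$ with no common admissibility constant, so the convex combinations $K^n$ need not lie in $\A_1$. This direction is in fact the delicate one, and the paper handles it quite differently: if NFLVR fails then, by Proposition 3.6 of \cite{DS94}, there is either an arbitrage opportunity (trivially an approximate arbitrage) or a cheap thrill; in the latter case $P(\hatK_T=\infty)=\delta>0$, and one builds $H^n$ explicitly from the minimal deflator $\hatZ$, entering only after $\hatZ$ has fallen to $1/n$ and exiting once it has fallen by a further factor $K$. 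That bespoke construction is what delivers simultaneously the uniform admissibility bound, $P(G_T(H^n)\geq K-1)\geq\delta$, and $P(G_T(H^n)\geq 0)\to 1$, the last because $\{G_T(H^n)<0\}\subseteq\{\sigma_n<T\}\cap\{\hatK_T<\infty\}$, an event of vanishing probability. You have the right tools (stopping times, Lemma A1.1 of \cite{DS94}), but each is deployed in the direction where it cannot close the argument.
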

\begin{proof}
Suppose that NFLVR fails to hold. Then, as in Proposition 3.6 of \cite{DS94}, there exists either an arbitrage opportunity or a cheap thrill. Clearly, if there exists an arbitrage opportunity then there also exists an approximate arbitrage. We now show that the existence of a cheap thrill yields an approximate arbitrage, thus proving that NAA implies NFLVR. Due to Lemma \ref{A1-UPBR-CT} together with Theorem \ref{NUPBR}, the existence of a cheap thrill is equivalent to $0<P(\hatK_T=\infty)=:\delta$. For a fixed $\kappa>1+\delta$ and for every $n\in\N$, define the stopping times
$$
\sigma_n:=\inf\bigl\{t\in[0,T]:\hatZ_t=1/n\bigr\}\wedge T
\qquad\text{and}\qquad 
\varrho_n:=\inf\bigl\{t\in[\sigma_n,T]:\hatZ_t=\hatZ_{\sigma_n}/\kappa\bigr\}\wedge T,
$$
where $\hatZ$ is as in Proposition \ref{weakdfl}. 
For every $n\in\N$, define $Y^n:=\hatZ^{\sigma_n}/\hatZ^{\varrho_n}$ and $H^n:=\ind_{\lsi\sigma_n,\varrho_n\rsi}(\hatZ/\hatZ_{\sigma_n})^{-1}\lambda$.
It\^o's formula implies then the following, for all $t\in[0,T]$:
$$
Y^n_t = 1+\int_0^t\!\frac{\hatZ_{\sigma_n}}{\hatZ_u}\ind_{\{\sigma_n\leq u\leq\varrho_n\}}\lambda_u\,dM_u
+\int_0^t\!\frac{\hatZ_{\sigma_n}}{\hatZ_u}\ind_{\{\sigma_n\leq u\leq\varrho_n\}}\lambda_u^{\top}d\langle M,M\rangle_{\!u}\lambda_u
= 1+G_t(H^n),
$$
thus showing that $\{H^n\}_{n\in\N}\subset\A_1$. Furthermore:
$$
\{\hatK_T=\infty\} = \{\hatZ_T=0\} \subseteq \{\sigma_n<\varrho_n<T\} 
\subseteq \{G_T(H^n)=\hatZ_{\sigma_n}/\hatZ_{\varrho_n}-1\} = \{G_T(H^n)=\kappa-1\}
$$
and, hence, $P(G_T(H^n)\geq \kappa-1)\geq\delta$, $\forall n\in\N$. Since we have $\{G_T(H^n)<0\}\subseteq\{\sigma_n<T\}\cap\{\hatK_T<\infty\}$ and $P(\sigma_n<T,\hatK_T<\infty)\rightarrow 0$ as $n\rightarrow\infty$, we also get
$$
P(G_T(H^n)\geq 0) = 1-P(G_T(H^n)<0) \geq 1-P(\sigma_n<T,\hatK_T<\infty) \rightarrow 1
$$
as $n\rightarrow\infty$, thus showing that the sequence $\{H^n\}_{n\in\N}\subset\A_1$ yields an approximate arbitrage.\\
Conversely, suppose that the sequence $\{H^n\}_{n\in\N}\subset\A_c$ generates an approximate arbitrage. By definition, for every $\varepsilon>0$, we have $P(G_T(H^n)^->\varepsilon)\leq P(G_T(H^n)<0)\rightarrow 0$ as $n\rightarrow\infty$. This means that $G_T(H^n)^-\rightarrow0$ in probability as $n\rightarrow\infty$ and, passing to a subsequence, we can assume that the convergence takes place $P$-a.s. For every $n\in\N$, let $f_n:=G_T(H^n)\wedge\delta\in\mathcal{C}$, so that $P(f_n=\delta)>\delta$ and $f_n^-\rightarrow 0$ $P$-a.s. as $n\rightarrow\infty$. Due to Lemma A1.1 (and the subsequent Remark 2) of \cite{DS94}, there exists a sequence $\{g_n\}_{n\in\N}$, with $g_n\in\conv\{f_n,f_{n+1},\ldots\}$, such that $g_n\rightarrow g$ $P$-a.s. as $n\rightarrow\infty$ for some random variable $g:\Omega\rightarrow[0,\delta]$. 
More precisely, due to the bounded convergence theorem (since $-c\leq g_n\leq\delta$ $P$-a.s. for all $n\in\N$),
$$
\delta\,P(g>0) \geq E[g\ind_{\{g>0\}}] = E[g] = \underset{n\rightarrow\infty}{\lim}E[g_n] \geq \delta^2,
$$
meaning that $\beta:=P(g>0)\geq\delta>0$. Egorov's theorem gives that $g_n$ converges to $g$ as $n\rightarrow\infty$ uniformly on a set $\Omega'$ with $P(\Omega')\geq 1-\beta/2$. For every $n\in\N$, define $h_n:=g_n\wedge\delta\ind_{\Omega'}$, so that $\{h_n\}_{n\in\N}\in\mathcal{C}$ and $h_n\rightarrow g\ind_{\Omega'}$ in the norm topology of $L^{\infty}$, i.e., $g\ind_{\Omega'}\in\overline{\mathcal{C}}\cap L^{\infty}_+$. Since
$$
P(g\ind_{\Omega'}>0) = 1-P\bigl(\{g=0\}\cup\Omega^{'c}\bigr)
\geq P(\Omega')-P(g=0) \geq 1-\beta/2-(1-\beta) = \beta/2>0,
$$
this shows that NFLVR fails to hold, thus proving that NFLVR implies NAA.
\end{proof}

Before formulating the next theorem, which essentially corresponds to the main result of \cite{DS94}, we need to recall the classical and well-known notion of \emph{Equivalent Local Martingale Measure}.

\begin{Def}	\label{ELMM}
A probability measure $Q$ on $(\Omega,\F)$ with $Q\sim P$ is said to be an \emph{Equivalent Local Martingale Measure (ELMM)} for $S$ if $S$ is a local $Q$-martingale.
\end{Def}

\begin{Rem}[\emph{On martingale deflators and ELMMs}]	\label{ELMM-defl}
Suppose that there exists an ELMM $Q$ for $S$. Letting $Z^Q=(Z^Q_t)_{0\leq t \leq T}$ be its density process, Bayes' rule implies that $Z^QS\in\Mloc$, meaning that $Z^Q/Z^Q_0\in\D$. Conversely, in view of Remark \ref{Remdefl}, an element $Z\in\D$ can be taken as the density process of an ELMM if and only if $E[Z_T]=1$.
\end{Rem}

\begin{Thm}	\label{NFLVR}
The following are equivalent, using the notation introduced in \eqref{char}-\eqref{char-2} and \eqref{MVT}:
\begin{itemize}
\item[(i)] the NFLVR condition holds;
\item[(ii)] there exists an ELMM for $S$;
\item[(iii)] $\nu_t=0$ $P\otimes B$-a.e., $\hatK_T=\int_0^T\!a^{\top}_tc^+_ta_t\,dB_t<\infty$ $P$-a.s. and there exists $N=(N_t)_{0\leq t\leq T}\in\mMloc$ with $N_0=0$, $\langle N,M\rangle=0$ and $\Delta N>-1$ $P$-a.s. such that $\hatZ\mathcal{E}(N)\in\M$;
\item[(iv)] the conditions NA1 (or, equivalently, NUPBR/NCT) and NA both hold;
\item[(v)] the NAA condition holds.
\end{itemize}
\end{Thm}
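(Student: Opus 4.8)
The plan is to deduce all five equivalences by assembling three already-available building blocks — Lemma \ref{approximate}, the structural results of Theorem \ref{NUPBR} together with Lemma \ref{strdefl}, and the classical Fundamental Theorem of Asset Pricing — rather than forcing everything into a single long cycle. First, the equivalence (i) $\Leftrightarrow$ (v) is precisely Lemma \ref{approximate}, so it needs no further argument. The equivalence (i) $\Leftrightarrow$ (ii) is the classical Fundamental Theorem of Asset Pricing: since $S$ is continuous, hence locally bounded, NFLVR is equivalent to the existence of an ELMM, and here I would simply invoke \cite{DS94}. These two steps anchor NFLVR both to NAA and to the existence of an ELMM, leaving (iii) and (iv) to be tied in.

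Next I would establish (ii) $\Leftrightarrow$ (iii), which is where the deflator machinery enters. For (ii) $\Rightarrow$ (iii): given an ELMM $Q$, its density process $Z^Q$ starts at $1$, is strictly positive, and satisfies $Z^QS\in\Mloc$ by Bayes' rule, so $Z^Q\in\D$ (Remark \ref{ELMM-defl}); in particular $\D\neq\emptyset$, whence Theorem \ref{NUPBR} yields $\nu_t=0$ $P\otimes B$-a.e. and $\hatK_T<\infty$ $P$-a.s. Lemma \ref{strdefl} then represents $Z^Q=\hatZ\,\mathcal{E}(N)$ for a suitable $N\in\Mloc$ with $N_0=0$, $\langle N,M\rangle=0$ and $\Delta N>-1$, and since $Z^Q$ is a uniformly integrable martingale we obtain $\hatZ\,\mathcal{E}(N)\in\M$. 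For the converse (iii) $\Rightarrow$ (ii): because $\hatK_T<\infty$ forces the stopping time $\tau$ of Proposition \ref{weakdfl} to be infinite, $\hatZ=\mathcal{E}(-\lambda\cdot M)$ is strictly positive on $[0,T]$, so $Z:=\hatZ\,\mathcal{E}(N)=\mathcal{E}(-\lambda\cdot M+N)$ is strictly positive with $Z_0=1$. An integration-by-parts computation — using $a=c\,\lambda$ (from $\nu=0$), the orthogonality $\langle N,M\rangle=0$, and the symmetry of $c$ — then shows $ZS^i\in\Mloc$ for each $i$, hence $Z\in\D$; since $Z\in\M$ by hypothesis, $E[Z_T]=1$ and Remark \ref{ELMM-defl} identifies $Z$ as the density process of an ELMM.

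Finally I would treat (i) $\Leftrightarrow$ (iv). The implication (i) $\Rightarrow$ (iv) is immediate: NFLVR trivially implies NA, and it implies NA1 (the first implication recorded in \eqref{impl}). For (iv) $\Rightarrow$ (i) I would argue by contraposition, reusing the decomposition exploited in the proof of Lemma \ref{approximate}: if NFLVR fails then, as in Proposition 3.6 of \cite{DS94}, the market admits either an arbitrage opportunity — contradicting NA — or a cheap thrill — contradicting NCT, hence NA1 by Lemma \ref{A1-UPBR-CT}. Thus NA and NA1 cannot both hold, which is exactly the contrapositive. I expect the main obstacle to be conceptual rather than computational: the genuine depth sits in the cited FTAP underlying (i) $\Leftrightarrow$ (ii), while the one delicate point fully under my control is the verification in (iii) $\Rightarrow$ (ii) that $\hatZ\,\mathcal{E}(N)$ is an honest martingale deflator, where one must check that the drift contributions from $dA=d\langle M,M\rangle\lambda$ and from the covariation $d\langle Z,S^i\rangle$ cancel exactly.
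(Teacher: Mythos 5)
Your proof is correct and assembles essentially the same ingredients as the paper's: Lemma \ref{approximate} for (i)$\Leftrightarrow$(v), the FTAP of \cite{DS94} for (i)$\Leftrightarrow$(ii), Theorem \ref{NUPBR} with Lemma \ref{strdefl} and Remark \ref{ELMM-defl} for (ii)$\Leftrightarrow$(iii), and the dichotomy of Proposition 3.6 of \cite{DS94} (NFLVR $=$ NA $+$ NUPBR) to tie in (iv). The only organisational difference is that the paper proves (ii)$\Rightarrow$(iv) directly, deducing NA from the existence of an ELMM via a $Q$-supermartingale argument, and then closes the cycle with (iv)$\Rightarrow$(v)$\Leftrightarrow$(i), whereas you anchor everything at (i) and rely on the same dichotomy for (iv)$\Rightarrow$(i); your explicit integration-by-parts verification that $\hatZ\mathcal{E}(N)$ is indeed a martingale deflator in (iii)$\Rightarrow$(ii) fills in a step the paper leaves implicit, and the only cosmetic slips are that $Z^Q$ should be normalised to $Z^Q/Z^Q_0$ (since $\F_0$ is not trivial) and that citing \eqref{impl} for NFLVR$\Rightarrow$NA1 is mildly circular, though harmless since the same \cite{DS94} dichotomy delivers it directly.
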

\begin{proof}
\emph{(i)}$\Leftrightarrow$\emph{(ii)}: this is the content of Corollary 1.2 of \cite{DS94}, recalling that $S$ is a continuous (and, hence, locally bounded) semimartingale.\\
\emph{(ii)}$\Leftrightarrow$\emph{(iii)}: this equivalence follows from Theorem \ref{NUPBR} and Lemma \ref{strdefl} together with Remark \ref{ELMM-defl}.\\
\emph{(ii)}$\Rightarrow$\emph{(iv)}: the existence of an ELMM for $S$ implies that $\D\neq\emptyset$. Hence, due to Theorem \ref{NUPBR}, the NA1 condition (as well as NUPBR and NCT) holds. Let $H\in\A$ yield an arbitrage opportunity. Lemma \ref{sigma} and Bayes' rule imply that the process $G(H)$ is a local $Q$-martingale uniformly bounded from below. Due to Fatou's lemma, it is also a $Q$-supermartingale and, hence, $E^Q[G_T(H)]\leq 0$. Since $Q\sim P$, this clearly contradicts the assumption that $H$ yields an arbitrage opportunity.\\
\emph{(iv)}$\Leftrightarrow$\emph{(v)}$\Leftrightarrow$\emph{(i)}: these equivalences follow from Lemma \ref{approximate} together with Proposition 3.6 of \cite{DS94}.
\end{proof}

\begin{Rems}
{\textbf{1)} }
As can be seen from part \emph{(iii)} of Theorem \ref{NFLVR}, the NFLVR condition, unlike the weak no-arbitrage conditions discussed in the previous sections, does not only depend on the characteristics of $S$ but also on the structure of the underlying filtration $\FF$. In particular, this means that in general one cannot prove the existence of arbitrage opportunities by relying on the characteristics of the discounted price process only (to this effect, compare also \cite{KK}, Example 4.7)\footnote{We want to mention that, in some special cases, it is possible to check the NFLVR condition in terms of the characteristics of the discounted price process $S$. For instance, in the case when $S$ is a continuous exponential semimartingale and one can take $dB_t=dt$ in \eqref{char}, a probabilistic characterisation of the absence of arbitrage opportunities in terms of the characteristics of $S$ has been obtained in the recent paper \cite{Ly}. In the case of non-negative one-dimensional Markovian diffusions, necessary and sufficient conditions for the validity of NFLVR are provided in \cite{MU2}.}.

{\textbf{2)} }
We want to warn the reader that NFLVR does not ensure that the measure $\hatQ$ defined by $d\hatQ/dP:=\hatZ_T$ is an ELMM for $S$, since NFLVR fails to imply in general that $E[\hatZ_T]=1$. In view of Remark \ref{weak-MMM}, this amounts to saying that NFLVR does not guarantee the existence of the minimal martingale measure (a counterexample is provided in \cite{DS98a}). In other words, the NFLVR condition cannot be checked by looking only at the properties of the minimal (weak) martingale deflator $\hatZ$, unlike weaker no-arbitrage conditions.

{\textbf{3)} }
There is no general implication between NA1 and NA. On the one hand, as shown in example \ref{example-NFLVR}, it might well be that NA1 holds but nevertheless there exist arbitrage opportunities. On the other hand, it is possible to construct models that admit no arbitrage opportunities but do not satisfy NA1 (an explicit example can be found in Section 4 of \cite{LS}; see also \cite{Hulley}, Example 1.37).
For models based on continuous semimartingales, without assuming a priori the validity of NA1, a characterisation of NA is given in Theorem 9 of \cite{KS} and in Theorem 2.1 of \cite{Str}.
\end{Rems}

The following corollary gives an interesting alternative characterisation of NA1, thus complementing Theorem \ref{NUPBR}.

\begin{Cor}	\label{NA1-NFLVR}
The NA1 condition holds if and only if there exists a $P$-a.s. strictly positive portfolio process $V:=V(1,\theta)$, for some $\theta\in\A_1$, such that the NFLVR condition holds for $(S/V,1/V)$.
\end{Cor}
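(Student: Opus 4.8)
The plan is to exploit the tradable martingale deflator $\hatZ$ furnished by Theorem \ref{NUPBR} and Proposition \ref{weakdfl}, using its inverse as the change-of-num\'eraire portfolio $V$, and then to reduce each implication to a result already available: Theorem \ref{NFLVR} for the forward direction and Corollary \ref{stable} together with the chain \eqref{impl} for the converse. The conceptual heart of the matter is that the change of num\'eraire by $V=1/\hatZ$ turns the minimal (weak) martingale deflator $\hatZ$ into the constant deflator $1$, equivalently into the reference measure $P$ itself, so that the deflated market is already a local martingale under $P$.

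First I would treat the implication from NA1 to NFLVR for $(S/V,1/V)$. Assuming NA1, Theorem \ref{NUPBR} gives $\hatK_T<\infty$ $P$-a.s., hence $\tau=\infty$ $P$-a.s., and Proposition \ref{weakdfl} then shows that $\hatZ=\mathcal{E}(-\lambda\cdot M)$ is a tradable martingale deflator with $\hatZ_T>0$ $P$-a.s.; by the minimum principle recalled in Remark \ref{Remdefl}, $\hatZ$ is $P$-a.s. strictly positive throughout. Tradability (again Remark \ref{Remdefl}) provides $\theta\in\A_1$ with $1/\hatZ=V(1,\theta)=:V$, which is therefore $P$-a.s. strictly positive. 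With this choice the discounted price process in the new num\'eraire is $(S/V,1/V)=(\hatZ S,\hatZ)$. Since $\hatZ\in\D$ we have $\hatZ S^i\in\Mloc$ for every $i=1,\ldots,d$, and $\hatZ\in\Mloc$ by the very definition of a martingale deflator, so $(S/V,1/V)$ is a local $P$-martingale. Thus $P$ itself is an ELMM for $(S/V,1/V)$, and the equivalence between parts \emph{(i)} and \emph{(ii)} of Theorem \ref{NFLVR}, applied to the continuous semimartingale market $(S/V,1/V)$, yields that NFLVR holds for $(S/V,1/V)$.

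For the converse, suppose there is a $P$-a.s. strictly positive $V=V(1,\theta)$ with $\theta\in\A_1$ such that NFLVR holds for $(S/V,1/V)$. Applying the implication NFLVR $\Longrightarrow$ NA1 from \eqref{impl} (equivalently, Theorem \ref{NFLVR}) to the market $(S/V,1/V)$ shows that NA1 holds for $(S/V,1/V)$, and Corollary \ref{stable} transfers this back to $S$, giving NA1 for $S$.

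The remaining points are bookkeeping rather than substance: I must verify that $(S/V,1/V)$ is again a continuous semimartingale, which is immediate since $V$ is continuous and $P$-a.s. strictly positive, so that both Theorem \ref{NFLVR} and Corollary \ref{stable} are legitimately applicable to it; and I must be explicit that the identity $(S/V,1/V)=(\hatZ S,\hatZ)$ is precisely what makes $P$ an ELMM for the transformed market. I expect the only real obstacle to be conceptual clarity on the meaning of ``NFLVR for $(S/V,1/V)$'' and on the fact that, under this num\'eraire change, the minimal deflator collapses to the constant $1$; once that identification is made, no genuine estimate is required.
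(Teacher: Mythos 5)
Your proposal is correct and follows essentially the same route as the paper: in the forward direction both arguments take $V$ to be the inverse of a tradable martingale deflator, so that $(S/V,1/V)$ is a local $P$-martingale, $P$ itself is an ELMM, and Theorem \ref{NFLVR} does the rest. The only cosmetic difference is in the converse, where you invoke NFLVR $\Rightarrow$ NA1 for $(S/V,1/V)$ and then the num\'eraire invariance of Corollary \ref{stable}, while the paper directly converts the density process of an ELMM for $(S/V,1/V)$ into a martingale deflator for $S$ via Bayes' rule; since the proof of Corollary \ref{stable} rests on exactly that computation, the two converses are the same argument in different packaging.
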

\begin{proof}
Due to Theorem \ref{NUPBR}, the NA1 condition implies the existence of a tradable martingale deflator $Z$, so that $1/Z=V(1,\theta)$ for some $\theta\in\A_1$ (see also Remark \ref{Remdefl}). By letting $V:=V(1,\theta)$, this means that $1/V\in\Mloc$ and $S/V\in\Mloc$ and so $P$ is an ELMM for $(S/V,1/V)$. Due to Theorem \ref{NFLVR}, this implies that $(S/V,1/V)$ satisfies NFLVR.
Conversely, if NFLVR holds for $(S/V,1/V)$, Theorem \ref{NFLVR} gives the existence of an ELMM $Q$ for $(S/V,1/V)$, with density process $Z^Q$. By Bayes' rule, we have $Z^Q/V\in\Mloc$ and $Z^QS/V\in\Mloc$. This means that $Z:=Z^Q/(Z^Q_0\,V)\in\D$. Theorem \ref{NUPBR} then implies that NA1 holds for $S$.
\end{proof}

In particular, the above corollary shows that, as long as NA1 holds, we can find a suitable numéraire $V$ such that the classical and stronger NFLVR condition holds in the $V$-discounted financial market $(S/V,1/V)$, regardless of the validity of NFLVR for the original financial market. In particular, if NA1 holds, the process $\hatZ$ is a tradable martingale deflator and, hence, letting $\hatV:=1/\hatZ$, the NFLVR condition holds for $(S/\hatV,1/\hatV)$. This suggests that, even in the absence of an ELMM for $S$, the financial market $(S/\hatV,1/\hatV)$ can be regarded as a natural setting for solving pricing and portfolio optimisation problems, as it is indeed proposed in the context of the Benchmark Approach (see e.g. \cite{PH}, Chapter 10).

\section{Examples and counterexamples}	\label{examples}

In the present section, we provide several examples and counterexamples for the different no-arbitrage conditions discussed so far. In particular, we aim at illustrating the relationships \eqref{impl}.

\begin{Ex}
We start by giving an explicit example of a model allowing for increasing profits. Let $N=(N_t)_{0\leq t \leq T}\in\Mloc^c$  and $S:=|N|$. Tanaka's formula (see \cite{RY}, Theorem VI.1.2) gives the following canonical decomposition:
$$
S_t = |N_0|+\int_0^t\sign(N_u)dN_u+L^0_t,
\qquad \text{for all }t\in[0,T],
$$
where the process $L^0=\left(L^0_t\right)_{0\leq t \leq T}$ is the local time of $N$ at the level $0$. Using the notation introduced in Section \ref{S2}, we have $A=L^0$ and $M=\sign(N)\cdot N$. We now show that $dA\ll d\langle M,M\rangle$ does not hold, where $\langle M,M\rangle=\langle N\rangle$.
In fact, Proposition VI.1.3 of \cite{RY} shows that, for almost all $\omega\in\Omega$, the measure (in $t$) $dL^0_t(\omega)$ is carried by the set $\{t:N_t(\omega)=0\}$. However, for almost all $\omega\in\Omega$, the set $\{t:N_t(\omega)=0\}$ has zero measure with respect to $d\langle N\rangle_{\!t}(\omega)$. This means that $L^0$ induces a measure which is singular with respect to the measure induced by $\langle N\rangle$. Theorem \ref{NIP} then implies that NIP fails.

In the present context, we can also explicitly construct a trading strategy generating an increasing profit. For simplicity, let us suppose that $N_0=0$ $P$-a.s. and define the process $H=\left(H_t\right)_{0\leq t \leq T}$ by $H:=\ind_{\{N=0\}\cap\,\lsir0,T\rsi}$. Clearly, $H$ is a bounded predictable process and so $H\in L(S)$. Furthermore, $(H\cdot M)_t=\int_0^tH_u\sign(N_u)\,dN_u=0$ $P$-a.s. for all $t\in[0,T]$.
Note also that $\int\!HdL^0=L^0$, since $dL^0_t(\omega)$ is carried by the set $\{t:N_t(\omega)=0\}$ for almost all $\omega\in\Omega$. Hence:
$$
(H\cdot S)_t
= \int_0^tH_u\sign(N_u)dN_u + \int_0^tH_udL^0_u
= L^0_t = \underset{s\leq t}{\sup}\left(-\int_0^s\sign(N_u)dN_u\right),
\qquad \text{for all }t\in[0,T],
$$
where the last equality follows from Skorohod's lemma (see \cite{RY}, Lemma VI.2.1). This shows that the gains from trading process $G(H)=H\cdot S$ starts from $0$ and is non-decreasing. In particular, $H\in\A_0$. Finally, if we assume that the local martingale $N$ is not identically zero, we also have $P\bigl(G_T(H)>0\bigr)>0$. Indeed, suppose on the contrary that $P\bigl(G_T(H)>0\bigr)=0$, so that $\sup_{s\leq T}\left(-\int_0^s\sign(N_u)dN_u\right)=0$ $P$-a.s. and, hence, $\int_0^t\sign(N_u)dN_u$ $\geq 0$ $P$-a.s. for all $t\in[0,T]$. By Fatou's lemma, this implies that $\sign(N)\cdot N$ is a non-negative supermartingale, being a non-negative continuous local martingale. Since $\left(\sign(N)\cdot N\right)_0=0$, the supermartingale property gives $\sign(N)\cdot N=0$, which in turn implies that $\langle N\rangle=\langle\sign(N)\cdot N\rangle=0$, thus contradicting the assumption that $N$ is not trivial.
\end{Ex}

\begin{Rem}
An interesting interpretation of the arbitrage possibilities arising from local times can be found in \cite{JP}, where it is shown that the existence of large traders (whose orders affect market prices) can introduce ``hidden'' arbitrage opportunities for the small traders, who act as price-takers. These arbitrage profits are ``hidden'' since they occur on time sets of Lebesgue measure zero, being related to the local time of Brownian motion.
Other examples of arbitrage profits arising from local time can be found in \cite{NS} and \cite{Rossello}. Furthermore, in the recent paper \cite{JP2} it is shown that condition \emph{(iii)} of Theorem \ref{NIP} can be violated when projecting an asset price process onto a subfiltration if there is a bubble in the original (larger) filtration.
\end{Rem}

We now present two examples of financial market models that satisfy NIP but allow for strong arbitrage opportunities. In view of Theorem \ref{NSA}, the two following examples satisfy $\nu_t=0$ $P\otimes B$-a.e. but $P(\alpha<T)>0$, meaning that the mean-variance trade-off process is allowed to jump to infinity with positive probability. 

\begin{Ex}		\label{ex-NIP-SA-1}
Let $M=(M_t)_{0\leq t \leq T}\in\Mloc^c$ with $M_0=0$ and let $\tau$ be a stopping time such that $P(\tau<T)>0$. Define the discounted price process $S=(S_t)_{0\leq t \leq T}$ of a single risky asset as follows:
$$
S = M + \langle M\rangle^{\beta}_{\cdot\wedge\tau}
+ \bigl(\langle M\rangle_{\cdot\vee\tau}-\langle M\rangle_{\tau}\bigr)^{\gamma},
$$ 
for some $\gamma\leq 1/2<\beta$. Then, due to It\^o's formula:
$$
dS_t = dM_t +\left(\beta\ind_{\{t\leq\tau\}}\langle M\rangle_t^{\beta-1}
+ \gamma\ind_{\{t>\tau\}}\bigl(\langle M\rangle_t-\langle M\rangle_{\tau}\bigr)^{\gamma-1}\right)d\langle M\rangle_t.
$$
Theorem \ref{NIP} implies that NIP holds. However, on $\{\tau<T\}$ we have that, for every $\varepsilon>0$:
$$
\hatK_{\tau}^{\tau+\varepsilon} = 
\gamma^2\!\!\int_{\tau}^{\tau+\varepsilon}\!\!\bigl(\langle M\rangle_t-\langle M\rangle_{\tau}\bigr)^{2(\gamma-1)}d\langle M\rangle_t
=
\begin{cases}
\gamma^2\log\bigl(\langle M\rangle_{\tau+t}-\langle M\rangle_{\tau}\bigr)\Bigr|_{t=0}^{\varepsilon}	 
&\text{if }\gamma=1/2	\\[0.3cm]
\frac{\gamma^2}{2\gamma-1}\bigl(\langle M\rangle_{\tau+t}-\langle M\rangle_{\tau}\bigr)^{2\gamma-1}\Bigr|_{t=0}^{\varepsilon} 
&\text{if } \gamma<1/2
\end{cases} \;
= \infty.
$$
This shows that in the present example we have $\alpha=\tau$ $P$-a.s. Hence, due to Theorem \ref{NSA}, the NSA condition fails to hold. By letting $M$ be a standard Brownian motion on $(\Omega,\F,\FF,P)$, $\gamma=1/2$ and $\tau=0$, we recover the situation considered in Example 3.4 of \cite{DS95b}.
\end{Ex}

\begin{Ex}		\label{ex-NIP-SA-2}
Let $W=(W_t)_{0\leq t \leq T}$ be a standard Brownian motion on $(\Omega,\F,\FF,P)$ and define $S$ as follows, for all $t\in[0,T]$:
$$
S_t = W_t + \int_0^t\!\frac{W_u}{u}\,du.
$$
Clearly, Theorem \ref{NIP} implies that NIP holds. However, due to Corollary 3.2 of \cite{JY}, we have $\int_0^{\varepsilon}(W_u/u)^2du=\infty$ $P$-a.s. for every $\varepsilon>0$, meaning that $\alpha=0$ $P$-a.s. Theorem \ref{NSA} then shows that NSA fails to hold (compare also with \cite{RSex}, Section 3.4).
\end{Ex}

\begin{Rem}
Strong arbitrage opportunities may also arise when considering insider trading models, where the original filtration $\FF$ is progressively enlarged with an \emph{honest} time $\tau$ which is not an $\FF$-stopping time. More specifically, as shown in \cite{Imk} (see also \cite{FJS}, Section 6), immediate arbitrage opportunities or, equivalently, strong arbitrage opportunities (see Lemma \ref{NIAO-NSA}), can be achieved in the enlarged filtration by trading as soon as $\tau$ occurs.
\end{Rem}

Let us continue by exhibiting a simple model which satisfies NSA but for which NA1 fails to hold (an analogous example can be found in \cite{LW}, Section 3.1). 

\begin{Ex}		\label{example-NA1}
Let $W=(W_t)_{0\leq t \leq T}$ be a standard Brownian motion on $(\Omega,\F,\FF,P)$ and define the process
$X=(X_t)_{0\leq t \leq T}$ as the solution to the following SDE, for some fixed $K>0$:
$$
dX_t = \frac{K-X_t}{T-t}dt+dW_t,
\qquad X_0=0.
$$
The process $X$ is a Brownian bridge (see \cite{RY}, Exercise IX.2.12) starting at the level $0$ and ending at the level $K>0$. Let us define the discounted price process $S=(S_t)_{0\leq t \leq T}$ of a single risky asset as $S_t:=\exp(X_t)$, for $t\in[0,T]$. Then, due to It\^o's formula:
$$
dS_t = S_t\left(\frac{K-\log(S_t)}{T-t}+\frac{1}{2}\right)dt+S_t\,dW_t,
\qquad S_0=1.
$$
It is easy to see that the condition of Theorem \ref{NSA} is satisfied and, hence, there are no strong arbitrage opportunities, since the process $\hatK=\int_0^{\cdot}\bigl(\frac{K-\log\left(S_u\right)}{T-u}+\frac{1}{2}\bigr)^2du$ does not jump to infinity. However, we have $\hatK_t<\infty$ $P$-a.s. for all $t\in\left[0,T\right)$ but $\hatK_T=\infty$ $P$-a.s. (compare also with \cite{C}). Theorem \ref{NUPBR} then implies that NA1 fails to hold.
\end{Ex}

Financial models satisfying NA1 but not NFLVR can typically be found in the context of Stochastic Portfolio Theory (see e.g. \cite{FK}, Sections 5-6) and within the Benchmark Approach (see e.g. \cite{Pl1}, \cite{PH}, Chapters 12-13, and \cite{Hulley}, Chapter 5). Moreover, models satisfying NA1 but not NFLVR can be constructed in a general way by means of absolutely continuous but not equivalent changes of measure (see e.g. \cite{DS95a}, \cite{OR} and, more recently,  \cite{CT}, \cite{Font} and \cite{RR}) and by means of filtration enlargements (see e.g. \cite{FJS}).
We close this section with the following example, which in particular generalizes the classical example based on a three-dimensional Bessel process (see \cite{DS95c}, Corollary 2.10). 

\begin{Ex}	\label{example-NFLVR}
Let $W=(W_t)_{0\leq t\leq T}$ be a standard Brownian motion on the filtered probability space $(\Omega,\F^W_T,\FF^W,P)$, with $\FF^W=(\F^W_t)_{0\leq t \leq T}$ denoting the $P$-augmented natural filtration of $W$, and take a continuous function $\sigma:(0,\infty)\rightarrow(0,\infty)$ such that the following SDE admits a unique strong solution:
\be	\label{SDE-ex}
dS_t = S_t\,\sigma^2(S_t)\,dt+S_t\,\sigma(S_t)\,dW_t,
\qquad S_0=1.
\ee
Assume furthermore that $\int_x^{\infty}\!\frac{1}{y\sigma^2(1/y)}dy<\infty$ for some $x\in(0,\infty)$. According to the notation introduced in Section \ref{S2}, we have $A_t=\int_0^t\!S_u\sigma^2(S_u)du$ and $M_t=\int_0^t\!S_u\sigma(S_u)dW_u$, for all $t\in[0,T]$, and $\lambda=1/S$. Since $S$ is locally bounded and $\sigma(\cdot)$ is continuous, this implies that $\hatK_T=\int_0^T\!\!\lambda^2_td\langle M\rangle_t<\infty$ $P$-a.s., thus showing that NA1 holds (see Theorem \ref{NUPBR}). Since $W$ enjoys the martingale representation property in the filtration $\FF^W$, Lemma \ref{strdefl} implies that $\D=\{\hatZ\}$, where $\hatZ=\mathcal{E}(-\lambda\cdot M)=\mathcal{E}(-\int\!\sigma(S)dW)=1/S$. However, since $\int_x^{\infty}\!\frac{1}{y\sigma^2(1/y)}dy<\infty$, for some $x\in(0,\infty)$, Corollary 4.3 of \cite{MU1} implies that $\hatZ$ is a strict local martingale in the sense of \cite{ELY}, i.e., it is a local martingale which fails to be a true martingale, so that $E[\hatZ_T]<1$. Due to Theorem \ref{NFLVR}, this shows that NFLVR fails for the model \eqref{SDE-ex}.

In the context of the present example, it is easy to construct explicitly an arbitrage opportunity. Indeed, let us define the process $L=(L_t)_{0\leq t \leq T}$ by $L_t:=E[\hatZ_T|\F_t]$, for all $t\in[0,T]$. Then, due to the martingale representation property, there exists an $\FF^W$-predictable process $\theta=(\theta_t)_{0\leq t \leq T}$ with $\int_0^T\!\theta^2_tdt<\infty$ $P$-a.s such that $L=E[\hatZ_T]+\theta\cdot W$ $P$-a.s. Let us also define the process $V:=L/\hatZ=LS$. A simple application of the product rule gives
$$
dV_t = L_tdS_t+S_tdL_t+d\langle L,S\rangle_{\!t}
= L_tS_t\bigl(\sigma^2(S_t)dt+\sigma(S_t)dW_t\bigr)+S_t\theta_tdW_t+S_t\sigma(S_t)\theta_tdt
= \varphi_tdS_t,
$$ 
where the process $\varphi=(\varphi_t)_{0\leq t \leq T}$ is defined as $\varphi_t:=L_t+\theta_t/\sigma(S_t)$, for $t\in[0,T]$. The continuity of $L$, $S$ and of the function $\sigma(\cdot)$ implies that $\varphi\in L(S)$. Noting that $G(\varphi)=V-V_0\geq-E[\hatZ_T]>-1$ $P$-a.s., we also have $\varphi\in\A_1$. Since $G_T(\varphi)=V_T-V_0=1-E[\hatZ_T]>0$ $P$-a.s., this means that $\varphi$ yields an arbitrage opportunity. We have thus shown that the model \eqref{SDE-ex} allows for the possibility of replicating a risk-free zero-coupon bond of unitary nominal value starting from an initial investment which is strictly less than one. However, not every market participant can profit from this arbitrage opportunity in an unlimited way, since the strategy $\varphi\in\A_1$ requires a non-negligible line of credit.

In particular, any function of the form $\sigma(x)=x^{\mu}$, for $\mu<0$, satisfies the integrability condition $\int_x^{\infty}\!\frac{1}{y\sigma^2(1/y)}dy<\infty$ for any $x\in(0,\infty)$. In the special case $\mu=-1$, it can be shown that the process $S$ is a three-dimensional Bessel process (see \cite{RY}, Chapter XI), the classical example of a financial model for which NA (and, hence, NFLVR as well) fails, as shown already in \cite{DS95a}, in Corollary 2.10 of \cite{DS95c} and in Example 4.6 of \cite{KK} (see also\cite{DShi} for related results).
\end{Ex}

\section{Conclusion and discussion}	\label{S7}

In the present paper, we have provided a unified account of several no-arbitrage conditions proposed in the literature in the context of financial market models based on continuous semimartingales. We have focused on the probabilistic characterisations of weak and strong no-arbitrage conditions as well as on the study of their relationships and of their equivalent formulations, illustrating the general theory by means of explicit examples and counterexamples. We now conclude the paper by commenting on the main economic and financial implications of the different no-arbitrage conditions considered so far.

In economic theory, a first and fundamental issue is represented by the relation between no-arbitrage and \emph{market viability}, in the sense of solvability of the portfolio optimisation problem for some hypothetical agent who prefers more to less. In that sense,  a viable model for a financial market is a potential model of a competitive equilibrium. The relations between no-arbitrage conditions and market viability, which go back to the seminal paper \cite{HK}, clarify to what extent the existence of arbitrage profits is incompatible with the possibility of a competitive equilibrium. 

It is easy to see that the minimal NIP requirement does not suffice to ensure any form of market viability. Indeed, Examples \ref{ex-NIP-SA-1} and \ref{ex-NIP-SA-2} show that NIP fails to exclude strong arbitrage opportunities. In the presence of a strong arbitrage opportunity, any agent with non-satiated strictly increasing preferences would invest in it in an unlimited way, because a strong arbitrage opportunity does not require any initial investment nor any amount of credit and, at the same time, yields a positive profit at the final time $T$ (see Definition \ref{arb}-\emph{(ii)}). Of course, such a possibility would contrast with the solvability of portfolio optimisation problems as well as with the existence of an equilibrium, because any agent could always improve the performance of his portfolio at zero cost without risk. 

In view of the preceding discussion, the NSA condition represents a necessary requirement for market viability. In this regard, the result of Theorem 1 of \cite{LW} is of particular interest, since it proves that the NSA condition is actually equivalent to market viability. More specifically, in the context of a complete It\^o process model (and considering utility from intermediate consumption as well), \cite{LW} show that, if NSA holds, then there exists an optimal portfolio for an agent who prefers more to less. However, the agent constructed in \cite{LW} exhibits strictly increasing but rather irregular (and discontinuous) preferences and, most importantly, has no capacity at all for undertaking a net trade requiring access to a credit line. 

It is therefore of interest to study which no-arbitrage condition is equivalent to market viability, defined for a wide class of regular preferences (as it is also the case in the seminal paper \cite{HK}). In this regard, the paper \cite{LW} again provides an interesting result. Indeed, Theorem 2 of \cite{LW} shows that there are no cheap thrills (see Definition \ref{UPBR}), or, equivalently, NA1 holds (see Lemma \ref{A1-UPBR-CT}), if and only if there exists an optimal portfolio for an agent with regular preferences. This result is confirmed and generalised in Proposition 4.19 of \cite{KK}, which shows that, in a general semimartingale setting, the failure of NA1 precludes the solvability of any portfolio optimisation problem (for any strictly increasing concave utility function $U:(0,\infty)\rightarrow\R$). Moreover, in the recent paper \cite{CDM} it has been proved that NA1 is exactly equivalent to the solvability of portfolio optimisation problems (for any strictly increasing concave utility function $U:(0,\infty)\rightarrow\R$ satisfying the Inada conditions and the asymptotic elasticity condition of \cite{KrSch}), up to an equivalent change of measure. Summing up, these results make clear that NA1 can be regarded as the minimal condition in order to ensure a meaningful form of market viability.

Moreover, the fundamental problems of valuation and hedging can be successfully addressed as long as the NA1 condition holds, provided that one replaces ELMMs with martingale deflators (see Definition \ref{weak-defl}). In particular, most of the classical results on the hedging and pricing of contingent claims and on market completeness can also be obtained in terms of martingale deflators, see e.g. \cite{FK}, \cite{FR}, \cite{Ruf} and, in a general semimartingale setting, \cite{BKX} and \cite{SY} (the super-hedging duality can also be extended to financial markets satisfying NA1, see Section 4.7 of \cite{KK}). Furthermore, as shown in Corollary \ref{NA1-NFLVR}, if NA1 holds then we can recover the classical NFLVR condition by means of a change of num\'eraire. This is also related to the Benchmark Approach proposed by Eckhard Platen and collaborators (see e.g. \cite{Pl1}-\cite{PH}), which provides a coherent framework for valuing contingent claims without relying on the existence of risk-neutral measures by considering the num\'eraire portfolio-discounted financial market.

Altogether, the above discussion suggests that the NIP and NSA conditions can be regarded as indispensable ``sanity checks'' and are only meant to exclude almost pathological notions of arbitrage. On the other hand, the NA1 condition, while strictly weaker than the classical NFLVR condition, is equivalent to an economically sound notion of market viability and allows to successfully solve the fundamental problems of portfolio optimisation, pricing and hedging. 

We close the paper with the following table, which summarises the no-arbitrage conditions introduced in Definition \ref{arb} and studied so far, together with their characterisations (see Theorems \ref{NIP}, \ref{NSA}, \ref{NUPBR} and \ref{NFLVR}) and their equivalent formulations (see Lemmata \ref{NIAO-NSA}, \ref{A1-UPBR-CT} and \ref{approximate}).

\begin{center}
\begin{footnotesize}
\begin{tabular}{|p{5.7cm}|p{4.7cm}|p{5.2cm}|}
\hline
\vspace{-0.6cm}\begin{center}\textsc{Condition}\vspace{-0.4cm}\end{center}
& \vspace{-0.6cm}\begin{center}\textsc{Probabilistic Characterisation}\vspace{-0.4cm}\end{center}
& \vspace{-0.6cm}\begin{center}\textsc{Equivalent $\,$formulation}\vspace{-0.4cm}\end{center}	\\
\hline\hline
\emph{No Increasing Profit (NIP)} 
& $\nu_t=0$ $P\otimes B$-a.e.  
& $\qquad\qquad\qquad\quad$-- \\
\hline
\emph{No Strong Arbitrage (NSA)} 
& $\nu_t=0$ $P\otimes B$-a.e. and 

$\alpha=\infty$ $P$-a.s. (i.e., $\hatK$ does not jump to infinity) 
& \emph{No Immediate Arbitrage (NIA)}\\
\hline
\emph{No Arbitrage of the First Kind (NA1)} 
& $\nu_t=0$ $P\otimes B$-a.e. and \newline $\hatK_T<\infty$ $P$-a.s. 
& \emph{No Unbounded Profit with Bounded Risk (NUPBR)} 

\vspace{0.1cm}\emph{No Cheap Thrill (NCT)} \\
\hline
\emph{No Free Lunch with Vanishing Risk (NFLVR)}
& $\nu_t=0$ $P\otimes B$-a.e. and \newline $\hatK_T<\infty$ $P$-a.s. and \newline $\exists N\in\Mloc$ with $N_0=0$, $\langle N,M\rangle=0$, $\Delta N>-1$ $P$-a.s. such that $\hatZ\mathcal{E}(N)\in\M$.
& \emph{No Approximate Arbitrage (NAA)}	\\
\hline
\end{tabular}
\end{footnotesize}
\end{center}

\vspace{1cm}

\noindent \setstretch{1}\begin{footnotesize}\textbf{Acknowledgements:} 
The author is thankful to Monique Jeanblanc, Wolfgang J. Runggaldier and Marek Rutkowski for many useful remarks and discussions on the topic of the present paper. This research was supported by a Marie Curie Intra European Fellowship within the 7th European Community Framework Programme under grant agreement PIEF-GA-2012-332345.\end{footnotesize}\\[-0.7cm]

\appendix
\setstretch{1.2}
\section{Appendix}

\noindent\textbf{Proof of Lemma \ref{sigma}.}\\
The first part of the proof relies on arguments similar to those used in the proofs of Proposition 3.2 of \cite{GLP} and Proposition 8 of \cite{RS}. Let $Z=(Z_t)_{0\leq t \leq T}\in\Dweak$ and $H\in L(S)$. Define the $\R^{d+1}$-valued local martingale $Y=(Y_t)_{0\leq t \leq T}$ by $Y_t:=(Z_tS^1_t,\ldots,Z_tS^d_t,Z_t)^{\top}$  and let $L(Y)$ be the set of all $\R^{d+1}$-valued predictable $Y$-integrable processes, in the sense of Definition III.6.17 of \cite{JS}. For all $n\in\N$, define also $H(n):=H\ind_{\{\|H\|\leq n\}}$. Then, using twice the integration by parts formula and the associativity of the stochastic integral:
$$	\ba
Z\bigl(H(n)\cdot S\bigr)
&= Z_-\cdot\bigl(H(n)\cdot S\bigr) + \bigl(H(n)\cdot S\bigr)_-\cdot Z + \bigl[Z,H(n)\cdot S\bigr]	\\
&= \bigl(Z_-\,H(n)\bigr)\cdot S + \bigl(H(n)\cdot S\bigr)_-\cdot Z + H(n)\cdot \left[S,Z\right]	\\
&= H(n)\cdot(Z_-\cdot S) + \bigl(H(n)\cdot S\bigr)_-\cdot Z + H(n)\cdot \left[S,Z\right]	\\
&= H(n)\cdot(ZS-S_-\cdot Z) + \bigl(H(n)\cdot S\bigr)_-\cdot Z	\\
&= H(n)\cdot\left(ZS\right) + \Bigl(\bigl(H(n)\cdot S\bigr)_--H(n)^{\top}S_-\Bigr)\cdot Z 
= K(n)\cdot Y
\ea	$$
where, for every $n\in\N$, the $\R^{d+1}$-valued predictable process $K(n)$ is defined as $K(n)^i:=H(n)^i$, for all $i=1,\ldots,d$, and $K(n)^{d+1}:=\bigl(H(n)\cdot S\bigr)_--H(n)^{\top}S_-$. Clearly, we have $K(n)\in L(Y)$, since $K(n)$ is predictable and locally bounded, for every $n\in\N$. Define the $\R^{d+1}$-valued predictable process $K$ by $K^i:=H^i$, for all  $i=1,\ldots,d$, and $K^{d+1}:=\left(H\cdot S\right)_--H^{\top}S_-$. Since $H\in L(S)$, $H(n)\cdot S$ converges to $H\cdot S$ in the Emery topology of semimartingales as $n\rightarrow\infty$. This implies that $K(n)\cdot Y=Z\bigl(H(n)\cdot S\bigr)$ also converges in Emery's topology, since the multiplication with $Z$ is a continuous operation. Since the space $\bigl\{K\cdot Y:K\in L(Y)\bigr\}$ is closed in Emery's topology (see \cite{JS}, Proposition III.6.26), we can conclude that $Z(H\cdot S)=\bar{K}\cdot Y$ for some $\bar{K}\in L(Y)$. But since $K(n)$ converges to $K$ ($P$-a.s. uniformly in $t$, at least along a subsequence) as $n\rightarrow\infty$, we can conclude that $\bar{K}=K$ (see \cite{Me}). This shows that $K\in L(Y)$. Since $Y\in\Mloc$ and $K\in L(Y)$, Proposition III.6.42 of \cite{JS} implies that $Z(H\cdot S)=K\cdot Y$ is a $\sigma$-martingale.
To prove the second assertion of the lemma, suppose that $H\in\A$, i.e., there exists a positive constant $a$ such that $(H\cdot S)_t\geq -a$ $P$-a.s. for all $t\in[0,T]$. The process $Z(a+H\cdot S)$ is a $\sigma$-martingale, being the sum of a local martingale and a $\sigma$-martingale. Furthermore, Proposition 3.1 and Corollary 3.1 of \cite{Kall} imply that $Z(a+H\cdot S)$ is a supermartingale, being a non-negative $\sigma$-martingale, and, hence, also a local martingale (compare also with \cite{AS2}, Corollary 3.5). In turn, this implies that $Z(H\cdot S)\in\Mloc$, being the difference of two local martingales.\\

\noindent\textbf{Proof of Lemma \ref{strdefl}.}\\
Let $Z=(Z_t)_{0\leq t \leq T}\in\D$. By Definition \ref{weak-defl} and Remark \ref{Remdefl}, the process $Z\in\Mloc$ satisfies $P\bigl(Z_t>0\text{ and }Z_{t-}>0\text{ for all }t\in[0,T]\bigr)=1$. In view of Theorem II.8.3 of \cite{JS}, the stochastic logarithm $L:=Z^{-1}\cdot Z$ is well-defined as a local martingale with $L_0=0$ and satisfies $Z=\mathcal{E}(L)$. Since $M\in\Mloc^c$, the process $L$ admits a Galtchouk-Kunita-Watanabe decomposition with respect to $M$, see \cite{AS1}. So, we can write $L = \psi\cdot M + N$ for some $\R^d$-valued predictable process $\psi=(\psi_t)_{0\leq t\leq T}\in L^2_{\text{loc}}(M)$ and for some $N=(N_t)_{0\leq t\leq T}\in\Mloc$ with $N_0=0$ and $\langle N,M\rangle=0$. Then, for all $i=1,\ldots,d$:
$$	\ba
ZS^i &= Z_-\cdot S^i + S^i\cdot Z + \langle Z,S^i\rangle
= Z_-\cdot A^i + Z_-\cdot M^i + S^i\cdot Z + \langle Z,M^i\rangle	\\
&= Z_-\cdot\Bigl(\int d\langle M^i,M\rangle\lambda\Bigr) + Z_-\cdot M^i +S^i\cdot Z 
+ Z_-\cdot\bigl\langle\psi\cdot M+N,M^i\bigr\rangle	\\
&= Z_-\cdot\Bigl(\int d\langle M^i,M\rangle\left(\lambda+\psi\right)\Bigr) + Z_-\cdot M^i +S^i_-\cdot Z 
\ea	$$
By Theorem IV.29 of \cite{Pr}, we have $Z_-\cdot M^i\in\Mloc^c$ and $S^i_-\cdot Z\in\Mloc$. In turn, this implies that $Z_-\cdot\bigl(\int d\langle M^i,M\rangle(\lambda+\psi)\bigr)\in\Mloc^c$, for all $i=1,\ldots,d$. Since $Z_->0$ $P$-a.s., Theorem III.15 of \cite{Pr} allows to conclude that $\int\!d\langle M^i,M\rangle(\lambda+\psi)=0$ for all $i=1,\ldots,d$, which in turn implies that the stochastic integral $\psi\cdot M$ is indistinguishable from $-\lambda\cdot M$, thus yielding the following representation:
$$
Z = \mathcal{E}(L) = \mathcal{E}(\psi\cdot M+N)
= \mathcal{E}(-\lambda\cdot M+N)
= \hatZ\,\mathcal{E}\left(N\right)
$$
where the last equality follows by Yor's formula (see e.g. \cite{Pr}, Theorem II.38) and Proposition \ref{weakdfl}. Since $Z>0$ and $\hatZ>0$ $P$-a.s., we also have $\mathcal{E}(N)>0$ $P$-a.s., meaning that $\Delta N>-1$ $P$-a.s.

\end{document}